\let\ssection=\section
\renewcommand{\section}{\setcounter{equation}{0}\ssection}
\newtheorem{definition}{Definition}[section]
\newtheorem{theorem}{Theorem}[section]
\newtheorem{lemma}[theorem]{Lemma}
\newenvironment{proof}[1][Proof]{\noindent\textbf{#1.} }{\ \rule{0.5em}{0.5em}}
\newcommand\eq[1]{(\ref{#1})}
\newcommand\ie{\emph{i.e.},\;}
\newcommand\Set{{\bf Sets}}
\renewcommand\a{\alpha}
\renewcommand\b{\beta}
\newcommand\das[1]{\delta(\hat{#1})}
\newcommand\dasto[2]{\delta(\hat{#2})_{#1}}
\newcommand{\dG}{\ps{\mkern1mu\raise2.5pt\hbox{$\scriptscriptstyle|$}
        {\mkern-7mu\rm O}}}                     
\newcommand{\dOU}{\ps{\mkern1mu\raise2.5pt\hbox{$\scriptscriptstyle|$}
        {\mkern-7mu\rm U}}}                     
\newcommand\de{\delta}
\newcommand{\ga}{\gamma}
\renewcommand\l{\lambda}
\newcommand{\tr}{{\rm tr}}
\newcommand\w{\mathfrak{w}}
\newcommand\BH{\mathcal{B(H)}}
\newcommand\De{\Delta}
\newcommand{\G}{\ps{O}}                         
\newcommand\Ga{\Gamma}
\newcommand{\Hi}{\mathcal{H}}
\newcommand\HG{{\rm Hyp}(\G)}
\renewcommand{\O}{\Omega}
\newcommand\Om{{\ps{\Omega}}}
\renewcommand{\P}{{\hat P}}
\newcommand\PH{\mathcal{P}(\mathcal{H})}
\newcommand\PV{\mathcal{P}(V)}
\newcommand{\PSig}{P_{{\rm cl}}\Sig}      
\newcommand\R{\mathcal{R}}
\newcommand\Sig{{\ps{\Sigma}}}
\newcommand{\SR}{\ps{{\mathbb{R}}^\succeq}}
\newcommand\UI{{\ps{[0,1]}^\succeq}}
\newcommand\Si{\Sigma}
\newcommand\TO{\mathbb{T}}
\newcommand\TOO{{}^{\rm org}\ps{\TO}}
\newcommand\VH{{\cal V}(\Hi)}
\newcommand\bra[1]{\langle #1|\,}
\newcommand\down[1]{\downarrow\!\!#1}
\newcommand\ddown[1]{_{|\downarrow\! #1}}
\newcommand\ket[1]{\,|#1\rangle}
\newcommand\ketbra[1]{\ket{#1}\bra{#1}}
\newcommand\la{\langle}
\newcommand\map{\rightarrow}
\newcommand\mathR{\mathbb{R}}
\newcommand\ra{\rangle}
\newcommand\LeftDB{[\mkern-3mu[}
\newcommand\RightDB{]\mkern-3mu]}
\newcommand\lr[2]{{\la#1,#2\ra}}
\newcommand\V[1]{{\cal V}(\Hi_{#1})}
\newcommand\Ob[1]{{\rm Ob(#1)}}
\newcommand\Sub[1]{{\rm Sub}(#1)}              
\newcommand\Subcl[1]{{\rm Sub}_{{\rm cl}}(#1)} 
\newcommand\Val[1]{\LeftDB\,#1\,\RightDB}
\newcommand\TVal[2]{\nu\big(#1;#2\big)}         
\newcommand\SAin[1]{\mbox{``}A\,\varepsilon\,#1\mbox{''}}
\newcommand\Ain[1]{A\,\varepsilon\,#1}
\newcommand\SetH[1]{\Set^{{\V{#1}}^{\rm op}}}  
\newcommand\SetC[1]{\Set^{{#1}^{\rm op}}}
\newcommand\Sh[1]{{\rm Sh}(#1)}
\newcommand\ShL[1]{{\rm Sh}(#1_L)}
\newcommand\sh[1]{\breve{\ps{#1}}}
\newcommand\ps[1]{\protect\underline{#1}}
\newcommand\eps[1]{\underline{\underline{#1}}}
\newtheorem{proposition}[theorem]{Proposition}
\definecolor{darkgreen}{rgb}{0,.66,0}
\newcommand\bbN{\mathbb{N}}
\begin{document}

\title{\textbf{Classical and Quantum Probabilities as Truth Values}}
\author{Andreas~D\"oring\footnote{andreas.doering@comlab.ox.ac.uk}\;  and
Chris~J.~Isham\footnote{c.isham@imperial.ac.uk}}
\date{February, 2011}
\maketitle

\begin{abstract}
We show how probabilities can be treated as truth values in suitable sheaf topoi. The scheme developed in this paper is very general and applies to both classical and quantum physics. On the quantum side, the results are a natural extension of our existing work on a topos approach to quantum theory. Earlier results on the representation of arbitrary quantum states are complemented with a purely logical perspective. 
\end{abstract}
\medskip
\begin{flushright}
``\textit{Fate laughs at probabilities.}''\\
E.G.~Bulwer-Lytton, from \textit{Eugene Aram} (1832)
\end{flushright}
\section{Introduction}\label{Sec:Intro}
In a long series of papers, we and our collaborators have shown
how quantum theory can be re-expressed
as a type of `classical physics' in the topos  of
presheaves (\ie set-valued contravariant functors) on the
partially-ordered set all commutative von Neumann
sub-algebras of the algebra of all bounded
operators on the quantum-theory Hilbert space $\Hi$
\cite{IB98,IB99,IB00,IB02,DI(1),DI(2),DI(3),DI(4),DI(Coecke)08}.
These ideas have been further developed by Caspers, Heunen, Landsman, Spitters, and Wolters in a way
that emphasises the internal structure of the topos
\cite{N1,N2,N3,N4,N5,N6}. Flori has presented a topos formulation of consistent histories in \cite{Flo09}.

The reformulation of quantum theory presented in these articles is
a radical departure from the usual Hilbert space formalism. All
aspects of quantum theory---states and state space, physical
quantities, the Born rule, etc.---find a new mathematical
representation, which also provides the possibility of a novel
conceptual understanding. In particular, an observer-independent,
non-instrumentalist interpretation becomes possible. For this
reason, we call the new formalism `neo-realist'.

Of course, many open questions remain. This article deals with  aspects of probability as it shows up in both
classical and quantum physics. As we will see, in both cases the usual
probabilistic description can be absorbed into
the logical framework supplied by topos theory.\footnote{As a side
remark, we do not see quantum theory fundamentally as some
kind of generalised probability theory. Such a viewpoint is almost
invariably based on an operational view of physics and, worse,
usually comes with a very unclear ontology of both probabilities
themselves and the objects or processes to which they apply.}

The interpretation of probability theory has been discussed
endlessly after the Renaissance endorsed it as a respectable
subject for study. In longevity, the subject shares the peristalithic
nature of debates about the conceptual meaning of quantum theory.
Every physicist even mildly interested in philosophical questions
will have heard about the range of different, incompatible
viewpoints about  probability: frequentist vs.~Bayesian
vs.~propensity; objective vs.~subjective probabilities; classical
vs.~quantum probabilities; epistemic (lack-of-knowledge) vs.~irreducible probabilities; and a bewildering range of combinations of those.

We cannot hope to solve this debate here, but some
useful remarks cane made from the viewpoint of physics. While most
scientists lean naturally towards the relative-frequency/frequentist view on probability, this interpretation is
limited because it cannot be used to assign a probability to the
outcome of a single experiment.\footnote{It is always interesting to
reflect on what a weather forecaster \emph{really} means when he
or she says "There is 80\% chance of snow tomorrow".} By definition, the frequentist interpretation requires a large ensemble of similar systems on which an experiment is performed, or a large
number of repetitions of the experiment on a single system. 

A\ particular challenge is posed by those physical situations in which a frequentist
interpretation cannot apply, even in principle. For example, if the
whole universe is regarded as a single entity, as in cosmology,
then clearly there are no multiple copies of the system. Moreover,
the instrumentalist concept of an `experiment' performed on
the entire universe is  meaningless, since there is no
external observer or agent who could perform such an experiment. This renders problematic both quantum
cosmology and stochastic classical cosmology unless probabilities
can be understood in non-instrumentalist terms.  This argument
applies also to subsystems of the universe provided they are
sufficiently large and unique to make impossible the preparation
of an ensemble of similar systems, or repetitions of an experiment
on the same system.

Of course, in most of science there \emph{is} a valid instrumentalist view in which the world is divided into a system, or
ensemble of systems, and an observer. The system, or ensemble,
shows probabilistic behaviour when an observer performs
experiments on it. In the ensuing two-level ontology the system
and the observer have very different conceptual status.
Frequentist views of probability typically lead to such a dualism.
A Bayesian view, in which probabilities are primarily states
of knowledge or evidence, also presupposes a divide between system and
observer and is based on an operational way of thinking about
physical systems.

Such an operational view does not readily extend  to
quantum cosmology.\footnote{We are aware of many-worlds approaches
and the attempts to (re)define probabilities in such a formalism. However, we are not very enthusiastic about these schemes.}
Hence, it is desirable to have a (more) realist formulation of
quantum theory---or, potentially, more general theories---that could
apply meaningfully to the whole universe. This desire to avoid the
two-level ontology of operational/instrumentalist approaches is
one of the motivations for the topos approach to the
formulation of physical theories.

In our previous work  it was shown in detail how pure quantum
states and propositions are represented in the topos formalism and
how truth values can be assigned to all propositions, without any
reference to observers, measurements or other instrumentalist
concepts. In fact, for pure quantum states, probabilities are
\emph{replaced} by truth values which are given by the structure
of the topos itself. For the specific presheaf topoi  used in our
reformulation of quantum theory, a truth value is a lower set  in
the set $\VH$ (which is partially ordered under inclusion) of
commutative subalgebras of the algebra, $\BH$, of all bounded
operators on $\Hi$.  We only consider non-trivial, commutative von
Neumann subalgebras $V\subset\BH$ that contain the identity
operator. Each $V\in\VH$ can be seen as providing a classical
perspective on the quantum system, with smaller commutative
subalgebras giving a more `coarse-grained' perspective than bigger
ones.

A truth-value is therefore a collection of classical perspectives
from which a given proposition is true. The fact that a truth
value is a \emph{lower} set in $\VH$ expresses the idea that once
a proposition is `true from the classical perspective $V$', upon
coarse-graining to smaller subalgebras $V'\subset V$, the
proposition should stay true. The elements\footnote{Strictly speaking, $\VH$\ is a category whose \emph{objects} are the commutative sub-algebras $V$. It is therefore more accurate to write $V\in\Ob{\VH}$, rather than $V\in\VH$, and this we did in our earlier papers. However, here, for the sake of simplicity, we use the latter notation.} $V\in\VH$ are also
called \emph{contexts} or, by mathematicians, \emph{stages of
truth}. It is easy to see that there are uncountably many possible truth values if ${\rm dim}(\Hi)>1$ (compare this with the uncountably
many probabilities in the interval $[0,1]$.)

The treatment of pure states in the topos approach makes it
unnecessary to speak of probabilities in any fundamental way. However,
at first sight mixed states cannot be treated similarly. Since we
aim at a formalism that can be interpreted in a (neo-)realist way,
it seems most appropriate to regard probabilities as objective.
More specifically, we lean towards an interpretation of
probabilities as propensities. Since the probabilities associated
with pure quantum states are absorbed into the logical structure
given by the topos, we aspire to find a `logical reformulation'
for the probabilities associated with mixed states as well. As we
will see, this necessitates an extension of the topos used so far for
quantum theory. Probabilities are thereby built into the
mathematical structures in an intrinsic manner. They are tied up with the
internal logic of the topos and do not show up as external
entities to be introduced when speaking about experiments.

We finally remark that all constructions shown here work for
arbitrary von Neumann algebras\footnote{For some results, the
algebra must not have a type $I_2$-summand.} and arbitrary states,
normal or non-normal. The more general proofs need no extra
effort, though interpretational subtleties relating to non-normal
states may arise. For simplicity and clarity of presentation, we
use here only $\BH$ as the algebra of physical quantities of a
quantum system and pure or mixed (\ie normal) states.

\section{The Topos Approach and Mixed States}
\subsection{Some basic definitions}
There are several articles giving an introduction to the topos
approach to quantum theory \cite{Doe07b,DI(Coecke)08,Doe10} and only a few ingredients are sketched here. We
assume some familiarity with basic aspects of category and topos
theory and of functional analysis.

A key feature of the topos approach
is the existence  for each quantum  system of an object that is
functionally analogous to the state space of a classical system.
This `quantum state space', $\Sig$, is a presheaf (\ie a
set-valued, contravariant functor) on the poset $\VH$ of abelian
subalgebras of $\BH$, the algebra of physical quantities (or observables) of the quantum system. The collection of all such functors is a topos, denoted $\SetC\VH$. The poset $\VH$ and object $\Sig$ are
known respectively as the \emph{context category} and
\emph{spectral presheaf}.

\begin{definition}
The \emph{spectral presheaf}, $\Sig$, is defined over the context category, $\VH$, as follows:
\begin{enumerate}
        \item [(i)] On objects  $V\in\VH$,  $\Sig_V$
        is the Gel'fand spectrum of the commutative von Neumann algebra $V$.
        \item [(ii)] On morphisms
        $i_{V'V}:V'\map V$ (\ie $V'\subseteq V$), the presheaf functions
$\Sig(i_{V'V}):\Sig_V\map \Sig_{V'}$ are defined as
        \begin{align}
             \Sig(i_{V'V})(\l):= \l|_{V'}
        \end{align}
for $\l\in\Sig_V$. Here $\l|_{V'}$ denotes the restriction to $V'$
of the spectral element $\l\in\Sig_V$.
\end{enumerate}
\end{definition}

A sub-object (\ie sub-presheaf) $\ps S$ is
said to be \emph{clopen} if $\ps S_V$ is a clopen set in $\Sig_V$
for all $V\in\VH$. The clopen sub-objects of $\Sig$ form a
complete Heyting algebra (Theorem 2.5 in \cite{DI(2)}), denoted 
$\Subcl{\Sig}$.

Propositions in physics 
are usually of the form $\SAin\De$, which in classical physics would read
``the physical quantity $A$ has a value, and that value lies in
the (Borel) set $\De\subseteq\mathR$ of real numbers''. Using the spectral
theorem for self-adjoint operators, a proposition $\SAin\De$ is
represented in quantum theory by the spectral projector $\hat E[\Ain\De]$. It was
shown in \cite{DI(2)} that there is a map
$\ps\de:\PH\map\Subcl{\Sig}$, called `\emph{daseinisation} of
projection operators', which sends each projection operator $\P$
to a clopen sub-object $\ps{\de(\P)}$. In this way, a proposition
$\SAin\De$ is represented by the clopen sub-object $\ps{\de(\hat
E[\Ain\De])}$ of $\Sig$. This is analogous to classical physics
where propositions are represented by (measurable) subsets of the
classical state space.

The sub-object $\ps{\de(\P)}$ is constructed in a two-step process.
First, one defines, for all $V\in\VH$,
\begin{equation}
     {\de(\P)}_V:=\bigwedge\{\hat Q\in\PV \mid \hat Q\succeq\P\},
     \label{Def:de(P)}
\end{equation}
which  gives the `best' approximation to $\P$ from above by a projector in the lattice, $\PV$, of projection operators in the  context $V$. The association $V\mapsto\de(\P)_V$ is a global element, denoted $\de(\P)$,
of the \emph{outer} presheaf $\G$ \cite{IB98,IB00}:

\begin{definition}
\label{DefG} The \emph{outer presheaf}, $\G$,  is defined as follows
:
\begin{enumerate}
\item[(i)] On objects $V\in\VH$,   $\G_V:=\PV$.

\item[(ii)] On morphisms $i_{V^{\prime}V}:V^{\prime
}\subseteq V$, the mapping $\G(i_{V^{\prime} V}):\G_V
\map\G_{V^{\prime}}$ is
$\G(i_{V^{\prime}V})(\P):=\dasto{V^\prime}{P}$ for all $\P\in\PV$.
\end{enumerate}
\end{definition}

The second step uses the existence of a monic arrow,
$\iota:\G\map\PSig$, from $\G$ to the \emph{clopen power object}, $\PSig$, of $\Sig$ \cite{DI(2)}. This sub-object, $\PSig$, of the  power object $P\Sig$ has the property that its global elements are \emph{clopen} sub-objects of $\Sig$, whereas the global elements of $P\Sig$ are arbitrary sub-objects \cite{DI(2)}. The construction of $\iota$ exploits the fact that for any commutative von
Neumann algebra $V$, there is an isomorphism of complete Boolean
algebras
\begin{eqnarray}
   \a_V:\PV=\G_V &\map& \mathcal{C}l(\Sig_V),\label{Def:aV}\\
     \P\   \ &\mapsto& \{\l\in\Sig_V \mid \l(\P)=1\}\nonumber
\end{eqnarray}
between the projections in $V$ and the clopen subsets of the
Gel'fand spectrum, $\Sig_V$, of $V$. Given a projection
$\P\in\PV$, we define 
\begin{equation}
        S_\P:=\alpha_V(\P),
\end{equation}
and given a clopen subset $S\subseteq\Sig_V$, we define
\begin{equation}
        \P_S:=\alpha_V^{-1}(S).
\end{equation} 
Thus locally, \ie
in each context $V$, we can switch between clopen subsets and
projections.

Applying this to the family $\de(\P)_V$, $V\in\VH$, of projectors, gives a family  $S_{\de(\P)_V}\subseteq\Sig_V$, $V\in\VH$, of clopen subsets. This family of clopen subsets forms a
clopen sub-object, denoted $\ps{\de(\P)}$, of $\Sig$. Equivalently, the global
element $\de(\P):\ps1\map\G$ defined by \eq{Def:de(P)} gives rise via $\iota:\G\map\PSig$ to a global
element $\ps1\map\G\map\PSig$ of $\PSig$, and hence to the clopen sub-object $\ps{\de(\P)}$.

Given a (normalised) vector state $\ket\psi$, we now form
\begin{equation}
     \ps\w^{\ket\psi}:=\ps{\de(\ket\psi\bra\psi)}\in\Subcl{\Sig}.
\end{equation}
This clopen sub-object is the topos representative of the pure
state $\ket\psi$: we call $\ps\w^{\ket\psi}$ the
\emph{pseudo-state} associated with $\ket\psi$.

To each  vector state $\ket\psi\in\Hi$ and physical proposition
$\SAin\De$ there corresponds the `topos truth value'
$\TVal{\Ain\De}{\ket\psi}\in\Ga\Om$ defined at each stage
$V\in\VH$ as,
\begin{equation}
\TVal{\Ain\De}{\ket\psi}(V):=\{V'\subseteq V\mid\bra\psi
                \delta(\hat E[\Ain\De])_{V'}\ket\psi=1\}.
                \label{TTVpsi}
\end{equation}
We also write
\begin{equation}
\TVal{\P}{\ket\psi}(V):=\{V'\subseteq V\mid\bra\psi
                \delta(\P)_{V'}\ket\psi=1\}
                \label{TTVPpsi}
\end{equation}
for any projection operator $\P$ on $\Hi$. 

This truth value, which is a sieve on $V$, can be
understood as the `degree' to which the sub-object
$\ps\w^{\ket\psi}$ that represents the pure state is contained in
the sub-object $\ps{\delta(\hat E[\Ain\De])}$ that represents the
proposition. This degree is simply the collection of all those
contexts $V\in\VH$ such that $\ps\w^{\ket\psi}_V\subseteq\ps{\delta(\hat E[\Ain\De])}_V$.

\subsection{The problem of mixed states}
A very interesting (and important) question is how mixed states
should be treated in this topos formalism. In its basic form, a
mixed state is just a collection of vectors and `weights',
$\{\ket{\psi_1},\ket{\psi_2},\ldots,\ket{\psi_N}; r_1,r_2,\ldots,
r_N\}$, where $\sum_{i=1}^Nr_i=1$ (we allow $N=\infty$). In
standard quantum theory, the assumption that the quantum probabilities are stochastically
independent from the `weights' $r_1,r_2,\ldots, r_N$, leads via elementary probability arguments
to the
familiar definition of the associated density matrix as
$\rho:=\sum_{i=1}^Nr_i\ket{\psi_i}\bra{\psi_i}$ and hence to the
familiar expressions in which, for example, $\tr(\rho\hat A)$
replaces $\bra\psi\hat A\ket\psi$.

The important challenge is to find an analogue
for $\rho$ of the truth value in \eq{TTVpsi} for
vector states $\ket\psi$, and in such a way that density matrices are separated by this expression. Of course, the vector $\ket\psi$ in \eq{TTVpsi}
can be replaced with  $\rho$ to give the sieve on $V$
\begin{equation}
\TVal{\Ain\De}{\rho}(V):=\{V'\subseteq V\mid{\rm tr}\big(
                \rho\,\delta(\hat E[\Ain\De])_{V'}\big)=1\},
                \label{TTVrho}
\end{equation}
but this is not adequate for our needs as it does \emph{not} separate density matrices. 

For example,
let $\Hi$ be a three-dimensional Hilbert space with a chosen
basis, and let
$\rho=\operatorname{diag}(\frac{1}{2},\frac{1}{2},0)$ and
$\tilde\rho=\operatorname{diag}(\frac{3}{4},\frac{1}{4},0)$ be two
density matrices that are diagonal with respect to this basis.
Then ${\rm tr}(\rho\P)=1$ if and only if ${\rm
tr}(\tilde\rho\P)=1$ if and only if $\P\succeq\P_1+\P_2$, where
$\P_1,\P_2$ are the projections onto the rays determined by the
first two basis vectors. In other words, $\rho$ and $\tilde\rho$
have the same support, namely $\P_1+\P_2$, and the topos truth
value $\TVal{\Ain\De}{\rho}\in\Ga\Om$ (given by the family
$\TVal{\Ain\De}{\rho}(V)$, ${V\in\VH}$, of sieves defined
above) depends only on the supports, not on the actual weights, in
$\rho$ and $\tilde\rho$.

As pointed out in \cite{IB00},  there exists a
one-parameter family of valuations defined for all 
$V\in\VH$ as
\begin{equation}
\TVal{\Ain\De}{\rho}^r(V):=\{V'\subseteq V\mid
\tr\big(\rho\,\delta(\hat E[\Ain\De])_{V'}\big)\geq r\}
                \label{TTVrho_r}
\end{equation}
where $r\in (0,1]$. (One could allow $r=0$, but clearly, the condition then is trivially fulfilled and the valuation $\nu(-;-)^0$ gives the truth value `totally true', represented by the maximal sieve at each stage, for all states $\rho$ and all propositions ``$\Ain\De$''.) The authors of \cite{IB00} could find no use
for these $r$-modified `truth' values. However,  it transpires
that this family of valuations \emph{does} separate density
matrices, which is very suggestive of how to proceed.

Indeed, the main result of this paper is to show how the
one-parameter family in \eq{TTVrho_r} can be regarded as a \emph{single} valuation
in a particular extension of the topos $\SetH{}$. By this means  we will achieve a completely topos-internal description of arbitrary
(normal) quantum states $\rho$.

At this point we remark that there is another topos perspective on
density matrices that at first sight appears to be very different
from the one above. One of us (AD) has shown how each density
matrix, $\rho$, gives rise to a `probability' (pre-)measure,
$\mu^\rho$, on $\Subcl\Sig$ \cite{ADmeas}. This function
$\mu^\rho:\Subcl{\Sig}\map\Ga\ps{[0,1]}^\succeq$ is defined at all
stages $V\in\VH$ by
\begin{equation}
        \mu^\rho(\ps{S})(V):={\rm tr}(\rho\,\P_{\ps{S}_V})
                                \label{Def:murho}
\end{equation}
where $\P_{\ps{S}_V}$ is  the projection operator in
${\cal P}(V)$ that corresponds to the component $\ps S_V$ of the clopen sub-object $\ps{S}$
at stage $V$. Here, $\UI$ denotes the presheaf of $[0,1]$-valued,
nowhere-increasing functions on the poset/category $\VH$. It can readily be checked that this family of measures separates density matrices.

Conversely, $\UI$-valued probability measures on the spectral presheaf $\Sig$
can be defined abstractly, with the clopen sub-objects playing the
role of measurable subsets. Provided the Hilbert space $\Hi$ has
at least dimension three,  from each such
measure, $\mu$, one can construct a unique quantum state
$\tilde\rho^\mu:\BH\map\mathbb{C}$, such that
$\mu^{\tilde\rho^\mu}=\mu$.\footnote{For the more general case of von
Neumann algebras treated in \cite{ADmeas}, the condition is that
the von Neumann algebra has no summand of type $I_2$.} The
existence of such measures is in accord with our general slogan
that ``Quantum physics is equivalent to classical physics in the
topos $\SetH{}$''.

The construction in \eq{Def:murho} can be
applied to a (normalised) vector state $\ket\psi$, to give
\begin{equation}
  \mu^{\ketbra\psi}(\ps{S})(V):=\bra\psi\P_{\ps{S}_V}\ket\psi
                                \label{Def:mupsi}
\end{equation}
at all stages $V$. In particular, for the sub-object
$\ps{S}_{\Ain\De}$ of $\Sig$ associated with the spectral
projector $\hat E(\Ain\De)$ we get
\begin{equation}
\mu^{\ketbra\psi}(\ps{S}_{\Ain\De})(V)=\bra\psi\de(\hat
E[\Ain\De])_V\ket\psi
        \label{muAinDelta}
\end{equation}

Thus, for each state $\ket\psi$, a proposition $\SAin\De$ is
associated with two, quite different, mathematical entities: 
\begin{enumerate}
\item[(i)]
the Heyting-algebra valued topos truth value
$\TVal{\Ain\De}{\ket\psi}\in\Ga\Om$ defined in \eq{TTVpsi} as
\begin{equation}
\TVal{\Ain\De}{\ket\psi}(V):=\{V'\subseteq V\mid\bra\psi
                \delta(\hat E[\Ain\De])_{V'}\ket\psi=1\}
                \label{TTVpsi2}
\end{equation}
for all $V\in\VH$; and 

\item[(ii)]the $\UI$-valued measure $\mu^{\ketbra\psi}$ on
$\Subcl\Sig$ defined in \eq{muAinDelta} as
\begin{equation}
\mu^{\ketbra\psi}(\ps{S}_{\Ain\De})(V):=\bra\psi\de(\hat
E[\Ain\De])_V\ket\psi
        \label{muAinDelta2}
\end{equation}
for all $V\in\VH$.
\end{enumerate}

The purpose of the present paper is to study the relation between
these two entities, and  thereby to see if there is a topos-logic
representation of the general, density-matrix measure, $\mu^\rho$
in \eq{Def:murho}. Thus the challenge is to relate the probability
measure $\mu^\rho:\Subcl{\Sig}\map\Ga\ps{[0,1]}^\succeq$ to a topos 
truth value and hence to relate probability to intuitionistic logic. 
It is clear that the definition in \eq{TTVrho} is not adequate as, 
unlike the measures $\mu^\rho$, it fails to separate density matrices.

As we shall see, the one-parameter family of valuations $r\mapsto
\TVal{\Ain\De}{\rho}^r\in\Ga\Om$ defined in \eq{TTVrho_r} plays a key role. However, the incorporation of this family into a topos framework requires
an extension of the original quantum topos $\SetH{}$. We will 
motivate this by first considering a topos perspective on classical
probability theory.

%
%

We conclude this section with a technical remark. Namely, instead
of talking about presheaves over the context category $\VH$, which
is a poset, we can  equivalently talk about sheaves if $\VH$ is
equipped with the (lower) Alexandroff topology in which the open
sets are the lower sets in $\VH$. Then, by a standard
result\footnote{If $\sh{A}$ denotes the sheaf associated with the
presheaf $\ps{A}$, then, on the open set $\down{V}$, we have
$\sh{A}(\down V)=\ps{A}_V$ for all $V\in\VH$. On an arbitrary lower
set $L\subseteq\VH$, $\sh{A}(L)$ is the (possibly empty) set of 
local sections of $\Sig$ over $L$.} we have
\begin{equation}                        \label{Eq_PresheafSheafTopos}
        \Sh{\VH_A}\simeq\SetC\VH,
\end{equation}
where $\VH_A$ denotes the poset $\VH$ equipped with the lower
Alexandroff topology. In what follows we will move freely between
the language of presheaves and that of sheaves---which of the two
is being used should be clear from the context. 

\section{A Topos Representation for Probabilities}\label{Sec_ToposRepOfProbabs}
The usual description of probabilities is by numbers in the interval $[0,1]$, equipped with the total order inherited from $\mathbb R$. In order to absorb probabilities into the logic of a topos, our goal now is to find a topos, $\tau$, such that
\begin{equation}
[0,1]\simeq\Ga\O^\tau                      \label{hsimeqOmtau}
\end{equation}
where $\O^\tau$ is the truth object in
$\tau$. A natural way of doing this is to look for a topological space, $X$, whose open sets correspond bijectively with the numbers between $0$ and $1$.

Since the global elements of the sub-object classifier $\O^\tau$ are the truth values available in the topos $\tau$, equation \eq{hsimeqOmtau} means that probabilities will correspond bijectively with truth values in the topos. 

A straightforward idea is to use suitable lower sets in the unit interval $[0,1]$. These lower sets are required to form a topology, so the collection of lower sets must be closed under arbitrary unions and finite intersections. The intervals of the form $(0,r)$, where $0\leq r\leq 1$, form a topology. Here $(0,0)=\emptyset$. The maximal element is $(0,1)$, so our topological space $X$ actually is the interval $(0,1)$ (and not $[0,1]$). When regarding $(0,1)$ as a topological space with the topology given by the sets of the form $(0,r)$, $0\leq r\leq 1$, we denote it as $(0,1)_L$. We define a bijection
\begin{align}                   \label{beta}
                        \beta:[0,1] &\map {\cal O}((0,1)_L)\\                   \nonumber
                        r &\mapsto (0,r).
\end{align}
Note also that intervals of the form $(0,r]$ would not work: they are not closed under arbitrary unions. Take for example all intervals $(0,r_i]$ such that $r_i<r_0$ for some fixed $r_0\in [0,1]$. Then $\bigcup_{i}(0,r_i]=(0,r_0)$.

It may seem odd that a probability $r\in[0,1]$ is represented by the (open) set $(0,r)$, which does not contain $r$, but this is not problematic. In the following, we will not interpret $(0,r)$ as the collection of all probabilities between $0$ and $r$, but as an open set, which corresponds to a truth value in a sheaf topos. This truth value, given by the structure of the topos, corresponds to the probability $r$.

Now, a key idea in topos theory is that any
topological space, $X$, should be replaced with the topos,
$\Sh{X}$, of sheaves over $X$. Furthermore, a standard result is
that there is an isomorphism of Heyting algebras.
\begin{equation}
                {\cal O}(X)\simeq \Ga\Om^{\Sh{X}}
\end{equation}
It becomes clear that the topos we are seeking is
$\ShL{(0,1)}$. We will simplify the notation $\Om^{\ShL{(0,1)}}$ to
just $\Om^{(0,1)}$; thus
\begin{equation}
        {\cal O}((0,1)_L)\simeq\Ga\Om^{(0,1)}.          \label{hsimeqGaO}
\end{equation}
We denote the isomorphism as $\sigma:{\cal O}((0,1)_L)\map\Ga\Om^{(0,1)}$. Its concrete form will be discussed below.

It is clear that, for all stages $(0,r)\in {\cal O}((0,1)_L)$, the component $\Om_r^{(0,1)}$ of the sheaf $\Om^{(0,1)}$ is given by
\begin{eqnarray}
                        \Om_{(0,r)}^{(0,1)}&=&\{(0,r')\mid0< r'\leq r\}\cup\emptyset\nonumber\\
      &=&\{(0,r')\mid 0\leq r'\leq r\}.
\end{eqnarray}
We remark that instead of describing stages as $(0,r)\in {\cal O}((0,1)_L)$, we can also think of $r\in [0,1]$ by the isomorphism \eq{beta}.

We now define a key map $\ell$ that takes a probability $p\in [0,1]$ into a global section of $\Om^{(0,1)}$, that is, a truth value in the topos. Roughly speaking, the idea is that $p$ is mapped to the open set $(0,p)$, which is a truth value in the sheaf topos. In fact, the map $\ell$ that we will define is nothing but the composition $\sigma\circ\beta$ of the set isomorphisms $\beta$ in \eq{beta} and $\sigma$ in \eq{hsimeqGaO}.

Concretely, we define for all $p\in [0,1]$ and all stages $(0,r)\in {\cal O}((0,1)_L)$ the following sieve on $(0,r)$:
\begin{eqnarray}
           \ell(p)_{(0,r)} &:=& \{(0,r')\in{\cal O}((0,1)_L) \mid p\geq r'\}\label{Def:ka(a)(r)}\\
     &=&\{(0,r') \mid r'=\min\{p,r\}\}
\end{eqnarray}
where, by definition, $(0,0)=\emptyset$. Thus we have
\begin{equation}
        \ell(p)_{(0,r)}=\left\{\begin{array}{ll}
                                \{(0,r')\in {\cal O}((0,1)_L) \mid r'\leq r\}=\Om_{(0,r)}^{(0,1)} &\mbox{ if $p\geq r$}\\
                                        \{(0,r')\in {\cal O}((0,1)_L) \mid r'\leq p\} &\mbox{ if $0<p < r$}\\
                                \emptyset&\mbox{ if $p=0$}
                              \end{array}
                         \right.
 \end{equation}
or simpler
\begin{equation}                        \label{ell}
        \ell(p)_r=\left\{\begin{array}{ll}
                                                                                        [0,r]=\Om_r^{(0,1)} &\mbox{ if $p\geq r$}\\
                              $[$0,p] &\mbox{ if $0<p<r$}\\
                            \emptyset&\mbox{ if $p=0.$}
                          \end{array}
                   \right.
\end{equation}
Here, we used the bijection $\{(0,r')\in{\cal O}((0,1)_L) \mid r'\leq r\}\simeq [0,r]$, etc., which of course is implied by equation \eq{beta}.

\section{The Construction of Truth Objects}
\subsection{The relation $\Ga(P\G)\simeq\Subcl\Sig$}
\label{Subsec:GaPO=Sub} The topos truth value \eq{TTVpsi} for a
vector state $\ket\psi$ can be viewed in two ways.  The first  employs the pseudo-state
$\ps\w^{\ket\psi}:=\ps{\de(\ketbra\psi)}\in\Subcl\Sig$ where, in the
notation of local set theory, we have\footnote{In local set
theory, to any pair, $\ps{S}_1,\ps{S}_2$ of sub-objects of a third object $\ps{A}$, there is
associated an element of $\Ga\Om$, denoted
$\Val{\ps{S}_1\subset\ps{S}_2}$, which measures the `extent' to
which it is true that $\ps{S}_1$ is  a sub-object of $\ps{S}_2$. In our case,  $\Val{\ps{S}_1\subset\ps{S}_2}(V):=\{V'\subseteq V\mid {\ps{S}_1}_{V'}\subseteq {\ps{S}_2}_{V'}\}$ for all $V\in\VH$.}
\begin{equation}
\TVal{\Ain\De}{\ket\psi}= \Val{\ps\w^{\ket\psi}\subseteq
\ps{\de(\hat E[\Ain\De])}}\in\Ga\Om\label{nupseudo}
\end{equation}
where the right hand side is well-defined since both
$\ps\w^{\ket\psi}$ and $\ps{\delta(\hat E[\Ain\De])}$ are clopen
sub-objects of the spectral presheaf $\Sig$.

Our goal  is to find a family of `truth objects',
$\ps{\TO}^{\ket\psi}$, $\ket\psi\in\Hi$, with the property that
the topos truth value in \eq{nupseudo} can be expressed
alternatively as
\begin{equation}
\TVal{\Ain\De}{\ket\psi}= \Val{\ps{\de(\hat
E[\Ain\De])}\in\ps{\TO}^{\ket\psi}}\label{deinT}
\end{equation}
We note that for \eq{deinT} to be meaningful,
$\ps{\TO}^{\ket\psi}$ must be a sub-object of the presheaf
$\PSig$.

In our earlier work on truth objects we  constructed the
quantities\footnote{The `org' is short for `original'.} $\TOO^{\ket\psi}$, $\ket\psi\in\Hi$, defined by
\begin{eqnarray}
\TOO^{\ket\psi}_V&:=&\{\hat\a\in\G_V\mid
       {\rm Prob}(\hat\a;\ket\psi)\}=1\nonumber\\
  &=&\{\hat\a\in\G_V\mid\bra\psi\hat\a\ket\psi\}=1\}
  \label{Def:TOpsi}
\end{eqnarray}
for all $V\in\VH$. It can readily be checked that \eq{Def:TOpsi}
defines a sub-object of $\G$. With the aid of the monic arrow
$\iota:\G\map\PSig$ this gives a sub-object of $\PSig$ to which \eq{deinT} applies.

For discussing daseinised sub-objects of $\Sig$ like
$\ps{\de({\hat E[\Ain\De])}}$ the simple definition in
\eq{Def:TOpsi} is sufficient. However, the situation changes if we
want to consider more general sub-objects of $\Sig$. Specifically,
we want to define $\ps\TO^{\ket\psi}$ in such a way that
\begin{equation}
\Val{\ps{S}\in\ps{\TO}^{\ket\psi}}\in\Ga\Om\label{SinT}
\end{equation}
is well-defined for \emph{any} clopen sub-object, $\ps{S}$, of
$\Sig$. This is necessary to fulfil our desire to relate topos
truth values with measures on $\Subcl\Sig$. Clopen sub-objects of
the form $\ps{\de({\hat E[\Ain\De])}}$ have very special
properties whereas our measures are defined on arbitrary clopen
sub-objects of $\Sig$. This necessitates a new definition of the
truth objects $\ps{\TO}^{\ket\psi}$, $\ket\psi\in\Hi$.

Evidently, \emph{clopen} sub-objects of the spectral presheaf
$\Sig$ are of particular interest as the analogues of measurable
subsets of a classical state space $\mathcal{S}$. More precisely,
as in \eq{Def:murho}, each quantum state $\rho$ determines
a ($\Ga\UI$-valued) `probability measure', $\mu^\rho$, on $\Sig$,
with the clopen sub-objects playing the role of measurable
subsets; conversely, each probability measure on
$\Subcl\Sig$ determines a unique quantum state. The
proof relies on Gleason's theorem which shows that a quantum state
is determined by the values it takes on projections. As we shall
see, clopen sub-objects have components which correspond to
projections, hence Gleason's theorem is applicable.

It is very desirable to be able to express all clopen sub-objects
of $\Sig$ in terms of projection operators as these are the
mathematical entities that have the most direct physical meaning
in quantum physics and they are also relatively easy to
manipulate. 

As mentioned earlier, there is  a monic arrow $\iota:\G\map\PSig$
 and, therefore, any global element of $\G$ leads to
a clopen sub-object of $\Sig$. However, $\iota$ is not surjective
and hence not all clopen sub-objects can be obtained in this way.
The remaining sub-objects can be recovered using the, so-called,
\emph{hyper-elements} of $\G$ that were introduced in
\cite{DI(2)}. This is an intermediate step towards realising our
main goal, which is to find some object $\ps{X}$ in the topos that
(i) can be defined purely in terms of projection operators; and
(ii) is such that\footnote{The power object $P\Sig$ is \emph{not}
the correct choice as $\Ga(P\Sig)\simeq\Sub\Sig$ and the latter
includes sub-objects of $\Sig$ that are not clopen; \ie
$\Subcl\Sig\subset\Sub\Sig$.} \begin{equation}
        \Ga\ps{X}\simeq\Subcl\Sig\label{GaX=Subcl}
\end{equation}

We start by recalling that a global element of $\G$ is a family of
elements $\hat\ga_V\in\G_V\simeq\PV$, $V\in\VH$, such that, for
all pairs $i_{V'V}:V'\subseteq V$ we have
\begin{equation}
        \G(i_{V'\,V})(\hat\ga_V)=\hat\ga_{V'}
\end{equation}
In other words
\begin{equation}
        \das{\ga_V}_{V'}=\hat\ga_{V'}.
\end{equation}

A \emph{hyper-element} of $\G$ is a generalisation of this concept. Specifically, a hyper-element is a family of elements  $\hat\ga_V\in\G_V$, $V\in\VH$, such
that, for all pairs $i_{V'V}:V'\subseteq V$ we have \cite{DI(2)}
\begin{equation}
        \G(i_{V'\,V})(\hat\ga_V)\preceq\hat\ga_{V'}
\end{equation}
In other words
\begin{equation}
        \de(\hat\ga_V)_{V'}\preceq\hat\ga_{V'}.
\end{equation}
We denote the set of all hyper-elements of $\G$ as $\HG$.

The importance of hyper-elements comes from the following
results which are proved in the Appendix.
\begin{enumerate}
\item There is a bijection 
\begin{eqnarray}
k:\HG&\map&\Subcl\Sig\label{Def:k}\\
k(\ga)_V&:=&\a_V(\hat\ga_V)=S_{\hat\ga_V}\nonumber
\end{eqnarray}
for all $\ga\in\HG$, $V\in\VH$ (see Proposition \ref{Theorem:HypO=Subcl}).

The inverse is
\begin{eqnarray}
j:\Subcl{\Sig}&\map&\HG\label{Def:j}\\
j(\ps{S})_V&:=&\a_V^{-1}(\ps{S}_V) =\P_{\ps{S}_V}\nonumber
\end{eqnarray}
for all $\ps{S}\in\Subcl\Sig$, $V\in\VH$.

\item There is a bijection 
\begin{eqnarray}
                c:\Sub\G&\map&\HG\label{Def:c}\\
c(\ps{A})_V&:=&\bigvee\{\hat\a\mid\hat\a\in\ps{A}_V\}\nonumber
\end{eqnarray}
for $\ps{A}\in\Sub{\G}$, $V\in\VH$ (see Proposition \ref{Theorem:SubO=HypO}).

The inverse is
\begin{eqnarray}
d:\HG&\map&\Sub\G   \label{Def:d}\\
d(\ga)_V&:=&\{\hat\a\in\G_V\mid\hat\a\preceq\hat\ga_V\}\nonumber
\end{eqnarray}
for all $\ga\in\HG$, $V\in\VH$.
\end{enumerate}

\noindent It follows from the above that there  is a bijection
\begin{eqnarray}
f:\Sub\G&\map&\Subcl\Sig        \label{SubO=SubclSig}\\
   f(\ps{A})_V&:=&S_{\bigvee\{\hat\a\in \ps{A}_V\}}\nonumber
\end{eqnarray}
for all $\ps{A}\in\Sub\G$, $V\in\VH$. The inverse is
\begin{eqnarray}
        g:\Subcl\Sig&\map&\Sub\G\\
g(\ps{S})_V&:=&\{\hat\a\in\G_V\mid\hat\a\preceq\P_{\ps{S}_V}\}
\end{eqnarray}
for all $\ps{S}\in\Subcl\Sig$, $V\in\VH$. We simply define $f:=k\circ c$ and
$g:=d\circ j$ where the bijections $c:\Sub\G\map\HG$,
$k:\HG\map\Subcl\Sig$, $j:\Subcl{\Sig}\map\HG$, and
$d:\HG\map\Sub\G$ are defined in \eq{Def:c}, \eq{Def:k},
\eq{Def:j}, and \eq{Def:d} respectively.

The problem posed in \eq{GaX=Subcl} can now be solved:
\begin{proposition}
The global elements of the power-object presheaf $P\G$ correspond
bijectively with the clopen sub-objects of $\Sig$.
\end{proposition}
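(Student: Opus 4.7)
The proof is essentially a two-step composition of bijections, one topos-theoretic and one specific to this setting. The plan is to invoke a standard fact from topos theory and then chain it with the bijection $f:\Sub\G\map\Subcl\Sig$ already established in \eq{SubO=SubclSig}.

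First I would recall the general topos-theoretic result that in any topos, the global elements of the power object $PA$ of an object $A$ correspond bijectively (and naturally) with sub-objects of $A$; that is, $\Ga(PA)\simeq\Sub(A)$. Applying this to the presheaf topos $\SetC\VH$ with $A=\G$, we obtain a bijection $\Ga(P\G)\simeq\Sub\G$. Concretely, a global element $\ps\tau:\ps 1\map P\G$ classifies a sub-object of $\G$, and every sub-object of $\G$ arises uniquely in this way.

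Next I would combine this with the bijection $f:\Sub\G\map\Subcl\Sig$ introduced in \eq{SubO=SubclSig}, which was built as the composite $f=k\circ c$ of the bijections $c:\Sub\G\map\HG$ of \eq{Def:c} and $k:\HG\map\Subcl\Sig$ of \eq{Def:k}. Composing, we get
\begin{equation}
\Ga(P\G)\;\simeq\;\Sub\G\;\stackrel{f}{\simeq}\;\Subcl\Sig,
\end{equation}
which is the desired bijection. Explicitly, a global element of $P\G$, viewed as a sub-object $\ps A\subseteq\G$, is sent to the clopen sub-object $f(\ps A)$ of $\Sig$ whose component at stage $V$ is $S_{\bigvee\{\hat\a\in\ps A_V\}}$; the inverse sends a clopen $\ps S\subseteq\Sig$ to the sub-object $g(\ps S)\subseteq\G$ whose component at $V$ is $\{\hat\a\in\G_V\mid\hat\a\preceq\P_{\ps S_V}\}$.

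The only delicate step is invoking the standard topos isomorphism $\Ga(P\G)\simeq\Sub\G$; all of the genuinely non-formal content---showing that hyper-elements of $\G$ classify clopen sub-objects of $\Sig$ and that sub-objects of $\G$ classify hyper-elements---was carried out in the two propositions cited, so no additional obstacle arises here. The main subtlety, if any, is simply to make explicit that the composite $f$ is indeed a bijection (rather than merely a pair of well-defined maps), which is immediate from the fact that $c$, $k$, $j=k^{-1}$, and $d=c^{-1}$ are all bijections as asserted in the preceding paragraphs.
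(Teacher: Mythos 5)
Your proof is correct and follows essentially the same route as the paper: it invokes the standard topos-theoretic fact $\Ga(P\ps{A})\simeq\Sub{\ps{A}}$ and composes it with the bijection $\Sub\G\simeq\Subcl\Sig$ established via $f=k\circ c$. The extra explicit description of $f$ and $g$ is a welcome but inessential elaboration.
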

\begin{proof}
We have shown above that there is a bijective correspondence
\begin{equation}
        \Sub\G\simeq\Subcl\Sig\label{SubO=SubclSigb}
\end{equation}
Now for any object $\ps{A}$ a fundamental property of the power
object, $P\ps{A}$, is that $\Ga(P\ps{A})\simeq\Sub{\ps{A}}$. It
follows from \eq{SubO=SubclSig} that
\begin{equation}
                \Ga(P\G)\simeq\Subcl\Sig
\end{equation}
\end{proof}

At this point it is worth stating the well-known specific form of
a power object in a topos of presheaves. Specifically:
\begin{definition}\label{DefPA}
The power object $P\ps{A}$ of any presheaf $\ps{A}$ over $\VH$ is
the presheaf given by
\begin{itemize}
        \item [(i)] On objects $V\in\VH$, for all $V\in\VH$, $P\ps{A}_V$ is defined as
        \begin{equation}
           P\ps{A}_V:=\{n_V:\ps{A}\ddown{V}\map\Om{\ddown{V}}
           \mid n_V\text{ is a natural transformation}\}\label{DefPAV}
        \end{equation}
        where $\ps{A}_{\ddown{V}}$ is the restriction of the 
        presheaf $\ps{A}$ to the smaller poset $\down{V}\subset\VH$, and analogously for
$\Om{\ddown{ V}}$.
        \item [(b)] On morphisms $i_{V'V}:V'\map V$
        (\ie $V'\subseteq V$), the presheaf functions 
        $P\ps{A}(i_{V'V}):P\ps{A}_V \map P\ps{A}_{V'}$ are defined as
        \begin{align}
       P\ps{A}(i_{V'V}):P\ps{A}_V &\map P\ps{A}_{V'}\nonumber\\
        n_V &\mapsto n_V{\ddown V'}\label{Def:PAiV'V}
        \end{align}
        Here, ${n_V}{\ddown V'}$ is the obvious restriction of the natural
        transformation $n_V$ to a natural transformation
        $\ps{A}{\ddown{V'}}\map\Om{\ddown{V'}}$.
\end{itemize}
\end{definition}

We now prove the `internal' analogue of the `external' isomorphism
$ \Sub\G\simeq\Subcl\Sig$ in \eq{SubO=SubclSig}, namely:
\begin{theorem}\label{Theorem:PO=PclS}
In the presheaf topos $\SetC\VH$, there is an isomorphism
\begin{equation}
        P\G\simeq \PSig
\end{equation}
between the power objects $P\G$ and $\PSig$.
\end{theorem}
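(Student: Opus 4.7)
My strategy is to lift the external isomorphism $\Sub\G \simeq \Subcl\Sig$ of \eq{SubO=SubclSig} to an internal isomorphism $P\G \simeq \PSig$ by replaying its construction stage-wise over each down-set $\down V \subseteq \VH$ and then checking naturality in $V$. The key observation is that the bijection $f = k \circ c$ of \eq{SubO=SubclSig} was defined entirely pointwise, using at each context $V''$ only the local Boolean-algebra isomorphism $\a_{V''}$ between $\G_{V''} = {\cal P}(V'')$ and $\mathcal{C}l(\Sig_{V''})$; consequently the same recipe produces a bijection on every restricted sub-poset of $\VH$.

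First I would identify the stages intrinsically. By Definition \ref{DefPA}, $P\G_V$ is the set of natural transformations $\G\ddown{V} \map \Om\ddown{V}$. Because every sieve on any $V' \subseteq V$ is automatically contained in $\down V$, the restriction $\Om\ddown{V}$ is the sub-object classifier of the presheaf topos $\SetC{\down V}$, which puts $P\G_V$ in bijection with $\Sub(\G\ddown{V})$. Analogously, $\PSig_V$ is in bijection with $\Subcl(\Sig\ddown{V})$. Applying the argument leading to \eq{SubO=SubclSig} verbatim to the sub-poset $\down V$ (which is again a poset carrying all the data used by Propositions \ref{Theorem:HypO=Subcl} and \ref{Theorem:SubO=HypO}) yields a bijection $\Sub(\G\ddown{V}) \simeq \Subcl(\Sig\ddown{V})$. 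Composing the three bijections gives a map $\phi_V : P\G_V \map \PSig_V$ at every $V$, whose inverse is defined by the same recipe.

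Second I would check that the family $(\phi_V)_{V \in \VH}$ is natural. Given $V' \subseteq V$, the transition map on $P\G$ sends a natural transformation $n_V$ to its further restriction $n_V\ddown{V'}$, while the transition map on $\PSig$ restricts a clopen sub-object from $\down V$ to $\down V'$. Since the stage-wise recipe defining $\phi$ only inspects data at contexts $V'' \in \down V'$, restricting first and then converting agrees with converting first and then restricting, making the naturality square commute. \textbf{Main obstacle.} This naturality verification is the only genuine piece of work in the proof, but it is not deep: it reduces to the observation that both restriction on $P\G$ and restriction on $\PSig$ simply forget data at contexts above $V'$, while the pointwise bijection is defined independently at each context. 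In this sense the theorem is essentially the internal reformulation of the external isomorphism \eq{SubO=SubclSig}.
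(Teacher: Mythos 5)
Your proposal is correct and follows essentially the same route as the paper's proof: identify $P\G_V$ with $\Sub{\G\ddown{V}}$ and $\PSig_V$ with $\Subcl{\Sig\ddown{V}}$, localise the external bijection $f=k\circ c$ of \eq{SubO=SubclSig} to each down-set $\down V$, and observe that because $f$ is defined pointwise via $\a_{V''}$ it commutes with the restriction maps, giving mutually inverse natural transformations. If anything you supply slightly more justification than the paper does for the identification $P\ps{A}_V\simeq\Sub{\ps{A}\ddown{V}}$ and for the naturality check, both of which the paper asserts without elaboration.
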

\begin{proof}

It follows from \eq{DefPAV} that an equivalent definition of
$P\ps{A}$ is
\begin{equation}
        P\ps{A}_V:=\Sub{\ps{A}{\ddown V}}\label{PAV=SubAV}
\end{equation}
for all $V\in\VH$. In this form, the presheaf maps in
\eq{Def:PAiV'V} are just the restriction of an object in
$\Sub{\ps{A}{\ddown V}}$ to $\down{V'}$ for all $V'\subseteq V$.

In particular, the Definition \ref{DefPA} (and equation \eq{PAV=SubAV})
applies to the power object, $P\G$, of the outer presheaf $\G$. The
presheaf $\PSig$ is defined in the same way except only clopen sub-objects of $\Sig$ are used. Thus, for all $V\in\VH$, we have
\begin{equation}
        P\G_V=\Sub{\G{\ddown V}}
\end{equation}
and
\begin{equation}
        \PSig_{V}=\Subcl{\Sig{\ddown V}}
\end{equation}
However, the bijection $f:\Sub\G\map\Subcl\Sig$ in
\eq{SubO=SubclSig} clearly gives rise to a series of `local'
bijections
\begin{equation}
f_V:\Sub{\G{\ddown V}}\map\Subcl{\Sig{\ddown V}}
\end{equation}
for all $V\in\VH$. The inverse is
\begin{equation}
g_V:\Subcl{\Sig{\ddown V}}\map\Sub{\G{\ddown V}}
\end{equation}
These local maps are consistent with subspace inclusions
$i_{V'V}:V'\subseteq V$, \ie they are the components of two (mutually inverse) natural transformations
\begin{equation}
                        f:P\G\map\PSig;\qquad g:\PSig\map P\G.
\end{equation}
Therefore, $P\G$ can be identified
with the clopen power object $\PSig$.
\end{proof}

We note that we we also have the bijections
\begin{equation}
c_V:\Sub{\G\ddown{V}}\map{\rm Hyp}(\G\ddown V)
\end{equation}
with inverse
\begin{equation}
d_V:{\rm Hyp}(\G\ddown V)\map\Sub{\G\ddown{V}}.
\end{equation}

\subsection{Generalised truth objects}
\label{Subsec_GeneralisedTruthObjs} We first recall from
\eq{TTVPpsi} that the topos truth value in $\Ga\Om$ for a proposition
represented by a projection operator $\P$ is
\begin{equation}
\TVal{\P}{\ket\psi}(V):=\{V'\subseteq V\mid\bra\psi
                \delta(\P)_{V'}\ket\psi=1\}
                \label{TTVPpsi2}
\end{equation}
for all $V\in\VH$. The expression in \eq{TTVPpsi2} can 
be usefully rewritten as
\begin{equation}
\TVal{\P}{\ket\psi}(V):=\{V'\subseteq V\mid
                \delta(\P)_{V'}\succeq\ketbra\psi\}
                \label{TTVPpsi3}
\end{equation}
and, similarly, the  `original' truth object  given in \eq{Def:TOpsi} can be rewritten as as
\begin{equation}
\TOO^{\ket\psi}_V=\{\hat\a\in\G_V\mid\hat\a\succeq\ketbra\psi\}
  \label{Def:TOorgpsi}
\end{equation}

As defined in \eq{Def:TOorgpsi}, ${}^{\rm org}\ps\TO^{\ket\psi}$  is
a sub-object of $\G$. The mathematical expression $\Val{\ps{\de(\P)}\in
{}^{\rm org}\ps\TO^{\ket\psi}}$ only has meaning if ${}^{\rm
org}\ps\TO^{\ket\psi}$ can be regarded a sub-object of $\PSig$, which it can by
virtue of the monic arrow $\G\map\PSig$. However, in order to give meaning to the valuations
$\Val{\ps{S}\in\ps\TO^{\ket\psi}}$ for an arbitrary clopen
sub-object, $\ps{S}$ of $\Sig$ we need to find a new expression
for a truth object that does not `factor' through the monic $\G\map\PSig$. 

Using the isomorphism $P\G\simeq \PSig$ proved in Proposition
\ref{Theorem:PO=PclS}, the new truth object can be regarded as a
sub-object of $P\G$. The concrete form of $P\G$ can be written in several equivalent ways using the local isomorphisms 
\begin{equation}
        P\G_V=\Sub{\G\ddown{V}}\simeq{\rm Hyp}(\G\ddown{V})\simeq
        \Subcl{\Sig\ddown{V}}.
\end{equation}
Specifically, we define the new truth object
$\ps\TO^{\ket\psi}\subset P\G$ as
\begin{eqnarray}
\ps\TO^{\ket\psi}_V&:=&\{\ps{A}\in\Sub{\G\ddown{V}}
\mid\forall V'\subseteq V,\ketbra\psi\in\ps{A}_{V'}\}\label{TVpsiSubO}\\
&\simeq&\{\ga\in{\rm Hyp}(\G\ddown{V})\mid \forall V'\subseteq V,
\ketbra\psi\preceq\hat\ga_{V'}\}\label{TVpsiHyp}\\
&\simeq&\{\ps{S}\in\Subcl{\Sig\ddown{V}}\mid\forall V'\subseteq V,
\ketbra\psi\preceq \P_{\ps{S}_{V'}}\}\label{TVpsiSubcl}
\end{eqnarray}
for all $V\in\VH$. For inclusions $i_{V'V}:V'\subseteq V$, the
presheaf maps $\ps\TO^{\ket\psi}(i_{V'V}):\ps\TO^{\ket\psi}_V \map
\ps\TO^{\ket\psi}_{V'}$ are just the obvious restrictions from
$\down{V}$ to $\down{V'}$.

We note that,\footnote{We also note that 
$\TOO^{\ket\psi}\in\Ga\ps\TO^{\ket\psi}$.} using \eq{TVpsiSubO},
\begin{equation}
        \Ga\ps\TO^{\ket\psi}\simeq\{\ps{A}\in\Sub{\G}\mid
        \forall V\in\VH, \ketbra\psi\in\ps{A}_V\}
\end{equation}
with equivalent expressions using \eq{TVpsiHyp} and \eq{TVpsiSubcl}.
In particular, 
\begin{equation}
\Ga\ps\TO^{\ket\psi}\simeq\{\ps{S}\in\Subcl\Sig\mid
\forall V\in\VH, \ketbra\psi \preceq\P_{\ps S_V}\}
\end{equation}

The topos truth value in
\eq{TTVpsi} can now be rewritten as
\begin{equation}
\TVal{\Ain\De}{\ket\psi}= \Val{\ps{\delta(\hat
E[\Ain\De])}\in\ps{\TO}^{\ket\psi}}
\end{equation}
However, we can now also give meaning to the valuation
$\Val{\ps{S}\in\ps\TO^{\ket\psi}}$ for \emph{any} clopen subset,
$\ps{S}$ of $\Sig$, not just those of the form $\ps{\de(\P)}$ for
some projection operator $\P$ on $\Hi$.

\subsection{The truth objects  $\ps\TO^{\rho,r}$, $r\in(0,1]$}
The truth objects will now be generalised by considering the one-parameter family
$\TVal{-}{\rho}^r$ of valuations given by \eq{TTVrho_r}. For each density matrix, $\rho$, and $r\in[0,1]$ we will
associate a corresponding truth object $\ps\TO^{\rho,r}$.

Note that, strictly speaking, if $r<1$ then $\ps\TO^{\rho,r}$ should not be
called a `truth' object: its global elements  represent
propositions that are only true \emph{with probability at least $r$}
in the state $\rho$. Although
$\ps\TO^{\ket\psi, 1}$ turns out to be the truth object defined in \eq{TVpsiSubcl}, in general, for
$0<r<1$, we will get collections of propositions that are not
totally true in the state $\ket\psi$.\footnote{We remark that mathematically, it is no problem to include the probability $r=0$, but from an interpretational viewpoint, one may want to exclude it, since global sections of the truth object $\ps\TO^{\rho,0}$ represent propositions that are true with probability at least $0$, and every proposition fulfils this trivially.}

The idea is to generalise condition \eq{TVpsiSubcl}, which
played the key role in the definition of $\ps\TO^{\ket\psi}$.
This condition determines which projections can appear as
components of global elements of the truth object
$\ps\TO^{\ket\psi}$. Thus this condition implements the
requirement that a sub-object $\ps S\subseteq\Sig$ which represents
a proposition that is totally true in the state $\ket\psi$ must
have components $\ps S_V$ that are `true from the local
perspective $V$', for all local perspectives (\ie contexts)
$V\in\VH$. Each such component $\ps S_V$ corresponds
to a projection $\P_{\ps S_V}$ in $V$ and represents a proposition
that is available in the context $V$. This local proposition is
(locally) true in the state $\ket\psi$ if and only if 
$\ketbra\psi\preceq\P_{\ps S_V}$ holds for the projections.

We can now generalise this condition. If $\rho_{\ket\psi}=\ketbra\psi$ is the
density matrix corresponding to the pure state $\ket\psi$ then
\begin{equation}
\ketbra\psi\preceq\P \ \Longleftrightarrow\ \
\tr(\rho_{\ket\psi}\P)=1.
\end{equation}
Instead of demanding $\tr(\rho_{\ket\psi}\P)=1$, we now just require
$\tr(\rho_{\ket\psi}\P)\geq r$, for a given $r\in [0,1]$. (For $r=0$,
the condition is trivially true.) We can also extend the idea to
general mixed states $\rho$ and demand that $\tr(\rho\P)\geq r$.

The generalised truth object $\ps\TO^{\rho,r}$ is defined using $\ps\TO^{\ket\psi}$ in \eq{TVpsiSubcl} as an analogue. Thus
\begin{equation}
\ps\TO^{\rho,r}_V:=\{\ps{S}\in\Subcl{\Sig\ddown{V}}\mid
\forall V'\subseteq V,\tr(\rho\P_{\ps S_{V'}})\geq r\}
\label{Def:TVNew}
\end{equation}
for all $V\in\VH$. For inclusions $i_{V'V}: V'\subseteq V$,  the presheaf maps  $\ps\TO^{\rho,r}(i_{V'V}):\ps\TO^{\rho,r}_V \map \ps\TO^{\rho,r}_{V'}$ are defined as the obvious restriction of sub-objects from $\down{V}$ to $\down{V'}$. Clearly, $\ps\TO^{\rho,r}$ is a sub-object of the power object $P\G$ . We note that $\ps\TO^{\ket\psi}=\ps\TO^{\rho_\psi,1}$.

\begin{lemma} For all states $\rho$ and all real coefficients
(\ie probabilities) $0\leq r_1<r_2\leq 1$, we have
\begin{equation}
\ps\TO^{\rho,r_1}\supseteq\ps\TO^{\rho,r_2}.
\end{equation}
\end{lemma}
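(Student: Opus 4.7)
The plan is to unfold the definition in \eq{Def:TVNew} and observe that the containment is essentially tautological, coming from the transitivity of the order $\geq$ on real numbers. The only subtlety is to verify that the inclusion holds at the level of sub-objects, \ie componentwise at every stage $V\in\VH$ and compatibly with the restriction maps.

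First I would fix an arbitrary $V\in\VH$ and take any $\ps{S}\in\ps\TO^{\rho,r_2}_V$. By definition, this means $\ps{S}\in\Subcl{\Sig\ddown V}$ and, for every $V'\subseteq V$, $\tr(\rho\P_{\ps S_{V'}})\geq r_2$. Since $r_1<r_2$, the inequality $\tr(\rho\P_{\ps S_{V'}})\geq r_2$ immediately implies $\tr(\rho\P_{\ps S_{V'}})\geq r_1$, so $\ps{S}\in\ps\TO^{\rho,r_1}_V$. Thus $\ps\TO^{\rho,r_2}_V\subseteq\ps\TO^{\rho,r_1}_V$ as subsets of $\Subcl{\Sig\ddown V}$.

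Second, I would remark that this componentwise inclusion is automatically compatible with the presheaf restriction maps. Indeed, the maps $\ps\TO^{\rho,r_i}(i_{V'V})$ are both defined as the obvious restriction of sub-objects from $\down V$ to $\down{V'}$, so the diagram of inclusions commutes trivially. Hence the family $\{\ps\TO^{\rho,r_2}_V\hookrightarrow\ps\TO^{\rho,r_1}_V\}_{V\in\VH}$ is a natural transformation, establishing $\ps\TO^{\rho,r_1}\supseteq\ps\TO^{\rho,r_2}$ as sub-objects of $P\G$.

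There is no real obstacle here; the argument is a direct consequence of the definition and the monotonicity of the defining condition in the parameter $r$. The only thing worth emphasising is the interpretational remark already made by the authors: enlarging $r$ imposes a strictly stronger condition on the allowable components $\P_{\ps S_{V'}}$, so the truth objects indexed by higher probability thresholds must sit inside those indexed by lower ones.
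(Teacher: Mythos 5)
Your argument is correct and is essentially identical to the paper's own proof: unfold the definition \eq{Def:TVNew} at each stage $V$ and observe that $\tr(\rho\P_{\ps S_{V'}})\geq r_2$ implies $\tr(\rho\P_{\ps S_{V'}})\geq r_1$ since $r_1<r_2$. Your additional remark that the stagewise inclusions are compatible with the restriction maps is a small (and welcome) extra check that the paper leaves implicit.
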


\begin{proof}
The assertion is that for all $V\in\VH$
\begin{equation}
\ps\TO^{\rho,r_1}_V\supseteq\ps\TO^{\rho,r_2}_V.
\end{equation}
By the definition in  \eq{Def:TVNew}, $\ps\TO^{\rho,r_2}_V$ is the family 
of clopen sub-objects $\ps S$ of $\Sig_{\downarrow V}$ such that for all
$V'\in\down{V}$ we have $\tr(\rho\P_{\ps
S_{V'}})\geq r_2$. But since $r_1<r_2$, $\tr(\rho\P_{\ps
S_{V'}})\geq r_2$  implies $\tr(\rho\P_{\ps S_{V'}})> r_1$
for all $V'\in\downarrow\!\!V$. Hence,
$\ps\TO^{\rho,r_2}_V\subseteq \ps\TO^{\rho,r_1}_V$.
\end{proof}

This result expresses the fact that the collection of sub-objects
that represent propositions which are true with probability at
least $r_1$ is bigger than the collection of sub-objects which
represents propositions that are true with probability at least
$r_2>r_1$. However, the generalised truth objects
$\ps\TO^{\rho,r_1}, \ps\TO^{\rho,r_2}$ are only collections of
sub-objects locally (in $V$). Globally they are presheaves whose global elements are sub-objects
of $\Sig$ that represent propositions
which are true with probability at least $r_1$, resp. $r_2$, in the
state $\rho$.
Note that the result $\ps\TO^{\rho,r_2}\subseteq\ps\TO^{\rho,r_1}\subseteq P\G$ implies
that $\Ga\ps\TO^{\rho,r_2}\subseteq\Ga\ps\TO^{\rho,r_1}$.

It can now be shown that
\begin{equation}
\TVal{\Ain\De}{\rho}^r=\Val{\ps{\delta(\hat
E[\Ain\De])}\in\ps{\TO}^{\rho, r}} \label{nuAinDer}
\end{equation}
for all $r\in [0,1]$ where, as in \eq{TTVrho_r}, for all stages
$V\in\VH$, we define
\begin{equation}
\TVal{\Ain\De}{\rho}^r(V):=\{V'\subseteq
V\mid\tr\big(\rho\,\delta(\hat E[\Ain\De])_{V'}\big)\geq r\}.
                \label{TTVrho_r2}
\end{equation}

There is no (obvious) analogue for $\TVal{\Ain\De}{\rho}^r$ of
the pseudo-state option in \eq{nupseudo} for
$\TVal{\Ain\De}{\ket\psi}$ and hence, in what follows, we will focus on
the use of truth objects.

\subsection{A topos perspective on classical probability theory}
\label{ToposClassicalProbs}
Before we proceed to develop  a topos version of quantum
probabilities, we first sketch a topos perspective on classical
probability theory. This is interesting in its own right and also provides
guidance for the quantum case later on.

Thus, suppose $X$ is a space with a
probability measure $\mu:\Sub{X} \map[0,1]$. Here, $\Sub{X}$
denotes the $\mu$-measurable subsets of $X$. Then we wish to
describe this situation using the topos $\ShL{(0,1)}$ discussed in
Section \ref{Sec_ToposRepOfProbabs}. As a first step we define the
sheaf $\ps{X}$ as the \'etale bundle over $(0,1)_L$ with constant
stalk $X$ over each $(0,r)\in {\cal O}((0,1)_L)$; in terms of the usual notation, $\ps{X}:=\Delta X$. Similarly, for any measurable subset $S\subseteq X$
of $X$ we define $\ps{S}:=\Delta S$. Thus there is a map
\begin{eqnarray}
        \De:\Sub{X}&\map& {\rm Sub}_{\ShL{(0,1)}}(\ps X)\nonumber\\
           S&\mapsto& \ps{S}:=\De S.\label{Def:De}
\end{eqnarray}

Motivated by \eq{Def:TVNew} we then define the sheaf
$\ps{\TO}^\mu$ in $\ShL{(0,1)}$ by
\begin{equation}
   \ps\TO^\mu_{(0,r)}:=\{S\subseteq X\mid\mu(S)\geq r\}
\end{equation}
at all stages $(0,r)\in {\cal O}((0,1)_L)$. Using the isomorphism \eq{beta}, we denote the stages as $r$, where $r\in [0,1]$:
\begin{equation}
                        \ps\TO^\mu_r:=\{S\subseteq X\mid\mu(S)\geq r\}.
\end{equation}
Note that, here, $r$ labels the stages, while in \eq{Def:TVNew} $r$ is fixed. If
$r_1<r_2$, the presheaf maps $\ps\TO^\mu(r_1<r_2):\ps\TO^\mu_{r_2} \map \ps\TO^\mu_{r_1}$ are defined in the obvious (trivial)
way:
\begin{align}
\ps\TO^\mu(r_1<r_2):\ps\TO^\mu_{r_2} &\map \ps\TO^\mu_{r_1}\\
                        S &\mapsto S.
\end{align}
It is easy to show that this defines a sheaf over the topological space $(0,1)_L$.

Now, $\ps{S}$ is a sub-object of $\ps{X}$, and $\ps\TO^\mu$ is  sub-object of $P\ps{X}$.\footnote{In the context of a formal language for the system, ${\TO}^\mu$ is of type $PP{X}$ and any $\De S$ is of type $P{X}$ (see \cite{DI(1), DI(Coecke)08} for more details on the use of types).} Therefore, the valuation
$\Val{\ps{S}\in\ps\TO^\mu}$ is well-defined as an element of
$\Ga\Om^{(0,1)}$. Specifically, at each stage $r\in [0,1]$ we have
\begin{eqnarray}
\Val{\ps S\in \ps{\TO}^\mu}(r)&:=&\{r'\leq r\mid \ps S_{r'}\in
                        \ps{\TO}^\mu_{r'}\}\nonumber\\
        &=&\{r'\leq r\mid \mu(S)\geq r'\}\label{Def:SinTmua}\\
        &=& [0,\mu(S)]\cap (0,r]        \label{Def:SinTmub}\\
        &=& [0,\min\{\mu(S),r\}],
\end{eqnarray}
which, as required, belongs to $\Om^{(0,1)}_r=\Om^{(0,1)}_{(0,r)}$. The resemblance to
\eq{TTVrho_r2} is clear and suggests that in the
quantum-theoretical expression the parameter $r$
should be viewed as  a `stage of truth' associated with the topos $\ShL{(0,1)}$. We  return to this in the next Section.

It is clear from \eq{Def:SinTmua} that, for any subset
$S\subseteq X$, the value of $\mu(S)$ can be recovered from the
valuation/global element $\Val{\ps S\in
\ps{\TO}^\mu}\in\Ga\Om^{(0,1]}$ in the topos $\ShL{(0,1]}$. More
precisely, the map $\mu:\Sub{X}\map[0,1]$ determines, and is
determined by, the map $\xi^\mu:{\rm Sub}_{{\ShL{(0,1)}}}(\ps
X)\map\Ga\Om^{(0,1]}$ defined as\begin{equation}
                        \xi^\mu(\ps S):=\Val{\ps S\in \ps{\TO}^\mu}\label{Def:ximu}
\end{equation}
for all $\ps S\in{\rm Sub}_{{\ShL{(0,1)}}}(\ps X)$. Thus we have
replaced the measure  $\mu$ with the collection of truth values
$\Val{\ps S\in \ps{\TO}^\mu}\in\Ga\Om^{(0,1)}$, where $S\subseteq
X$. Of course, in the present case this is rather trivial, but it
gives insight into how to proceed in the quantum theory.

Actually, the classical  result has some interest in its own right.
Essentially we have a new
interpretation of classical probability in which the results are
expressed in terms of truth values in the sheaf topos $\ShL{(0,1)}$. 
This suggests a `realist' (or `neo-realist')
interpretation of probability theory that could replace the
conventional (for a scientist) instrumentalist interpretation in
terms of relative frequencies of measurements.

We also remark that the truth values, given by $\Ga\Om^{(0,1)}$, form a Heyting algebra, which is one aspect of the intuitionistic logic of the topos. 
One well-known realist view of probability is the
\emph{propensity} theory in which a probability is viewed as the
propensity, or tendency, or potentiality, of the associated event
to occur. Thus the results above might be used to give a
precise mathematical definition of propensities in terms of the
Heyting algebra $\Ga\Om^{(0,1)}$, though it should be remarked that the role of the Heyting algebra structure is not entirely clear at the moment.

Using the map $\ell:[0,1]\map\Ga\Om^{(0,1)}$ defined in \eq{Def:ka(a)(r)}, we obtain
\begin{eqnarray}
                        [\ell\circ\mu(S)](r) &=& \ell(\mu(S))(r)=\{(0,r')\subseteq (0,r) \mid \mu(S)\geq r'\}\nonumber\\
                        &=&\Val{\ps S\in \ps{\TO}^\mu}(r)=\xi^\mu(\De(S))(r)
\end{eqnarray}
for all $r\in[0,1]$. Thus
\begin{equation}
        \ell\circ\mu=\xi^\mu\circ\Delta
\end{equation}
in the commutative diagram
\begin{equation}
\begin{diagram}                 \label{Def:CommDiagramClassical}
        \Sub{X} & & \rTo^{\mu} & & [0,1] \\
        \dTo^{\Delta} & & & & \dTo_{\ell} \\
        {\rm Sub}_{\ShL{(0,1)}}(\ps X) & & \rTo_{\xi^\mu} & & \Ga\Om^{(0,1)}
\end{diagram}
\end{equation}
This diagram will provide an important analogue in the following discussion of the quantum theory.

A measure $\mu$ is usually taken to be $\sigma$-additive, that is, for any countable family $(S_i)_{i\in\bbN}$ of pairwise disjoint, measurable subsets of $X$, we have
\begin{equation}
                        \mu(\bigcup_i S_i)=\sum_i \mu(S_i).
\end{equation}
We reformulate this property slightly. Let $\tilde S_0:=S_0$, and for each $i>0$, define recursively
\begin{equation}
                        \tilde S_i:=\tilde S_{i-1}\cup S_i.
\end{equation}
Then $(\tilde S_i)_{i\in\bbN}$ is a countable increasing family of measurable subsets of $X$. Now $\sigma$-additivity can be expressed as
\begin{equation}                        \label{Eq_sigmaAddAsJoinPreservation}
                        \mu(\bigvee_i \tilde S_i)=\mu(\bigcup_i \tilde S_i)=\sup_i \mu(\tilde S_i)=\bigvee_i \mu(\tilde S_i),
\end{equation}
that is, $\mu$ preserves countable joins (suprema).

The map $\ell:[0,1]\map\Ga\Om^{(0,1)}$, defined in \eq{ell}, also preserves joins, as can be seen easily: let $(p_i)_{i\in I}$ be a family of real numbers in the unit interval $[0,1]$ (the family need not be countable). Then
\begin{equation}
        \ell(\bigvee_i p_i)_r=\ell(\sup_i p_i)_r=\left\{\begin{array}{ll}
                                                                                                                        [0,r]=\Om_r^{(0,1)} &\mbox{ if $\sup_i p_i\geq r$}\\
                                                                $[$0,\sup_i p_i] &\mbox{ if $0<\sup_i p_i<r$}\\
                                                          \emptyset&\mbox{ if $\sup_i p_i=0$}
                                                           \end{array}
                                                   \right.
\end{equation}
and
\begin{equation}
        \bigvee_i \ell(p_i)_r=\bigvee_i\left\{\begin{array}{ll}
                                                                                                                        [0,r]=\Om_r^{(0,1)} &\mbox{ if $p_i\geq r$}\\
                                                                $[$0,p_i] &\mbox{ if $0<p_i<r$}\\
                                                          \emptyset&\mbox{ if $p_i=0,$}
                                                           \end{array}
                                                   \right.
\end{equation}
which directly implies $\ell(\bigvee_i p_i)_r=\bigvee_i(\ell(p_i)_r)$ for all $r\in [0,1]$. Since joins are defined stage-wise in $\Ga\Om^{(0,1)}$, we obtain
\begin{equation}
                        \ell(\bigvee_i p_i)=\bigvee_i \ell(p_i).
\end{equation}
Hence, the composite map $\ell\circ\mu:\Sub{X}\map\Ga\Om^{(0,1)}$ preserves countable joins of increasing families $(\tilde S_i)_{i\in\bbN}$ of measurable subsets, that is,
\begin{equation}
                        (\ell\circ\mu)(\bigvee_i \tilde S_i)=\bigvee_i (\ell\circ\mu)(\tilde S_i).
\end{equation}
Note that $\ell\circ\mu$ corresponds to the upper-right path through the diagram \eq{Def:CommDiagramClassical}. Since the diagram commutes, the left-lower path, \ie the map $\xi^\mu\circ\Delta$, also preserves such joins,
\begin{equation}
                        (\xi^\mu\circ\Delta)(\bigvee_i \tilde S_i)=(\ell\circ\mu)(\bigvee_i \tilde S_i)=\bigvee_i (\ell\circ\mu)(\tilde S_i).
\end{equation}
This is the logical reformulation of $\sigma$-additivity of the measure $\mu$.

\section{Application to Quantum Theory}
\subsection{The maps $j_r:\Ga\UI\rightarrow\Ga\Om$, $r\in(0,1]$}
These ideas will now be applied to quantum theory, keeping in mind
the commutative diagram in \eq{Def:CommDiagramClassical} whose
quantum analogue we seek. We already have the analogue of
$\mu:\Sub{X}\map[0,1]$ in the form of the measures
$\mu^\rho:\Subcl\Sig\map\Ga\UI$ where $\rho$ is a density matrix.
The challenge is to fill in the rest of the diagram for the
quantum case.

The first step is to see if $\Ga\UI$ can be associated with
$\Ga\Om$ in some way. To do this note that the
`$0$' and `$1$' that appear in $\Ga\UI$ refer to probability $0$
and $1$, and in that sense correspond to `false' and `true'
respectively. To clarify this we first define, for each
$r\in[0,1]$, the global element $\ga_r\in\Ga\UI$ as
\begin{equation}
        \ga_r(V):=r
\end{equation}
for all $V\in\V{}$. Then $\mu^\rho(\emptyset)=\ga_0$ and
$\mu^\rho(\Sig)=\ga_1$, which suggests  the sections $\ga_0$ and
$\ga_1$  are to be associated respectively with $\bot_\Om$ and $\top_\Om$
in $\Ga\Om$ .

Next define a map $j:\Ga\UI\rightarrow \Ga\Om$ by
\begin{equation}
        j(\ga)(V):=\{V'\subseteq V\mid \ga(V')=1\}\label{Def:j(ga)}
\end{equation}
for $V\in\VH$. This is a  sieve on $V$ since once $\ga(V')$
becomes `true' (\ie equal to $1$)\ it remains so because $\ga$ is a
nowhere-increasing function. Note that if $\ga_r$ denotes the
`constant' section with value $r$ then
\begin{equation}
        j(\ga_r)(V):=\{V'\subseteq V\mid r=1\}
\end{equation}
so that
\begin{equation}
 j(\ga_r)=\left\{\begin{array}{ll}
                  \top_{\Om} &\mbox{ if $r=1$ }\\
                   \bot_{\Om} &\mbox{ if $r<1$}
                         \end{array}
                    \right.
\end{equation}
Note also that, to each density matrix $\rho$ and projection
operator $\P$, there corresponds a global  element
$\ga_{\P,\rho}\in\Ga\UI$ defined by
\begin{equation}
 \ga_{\P,\rho}(V):={\rm tr}\big(\rho\,\delta(\P)_V\big)
        \label{Def:gaPrho}
\end{equation}
for all $V\in\VH$. Hence
\begin{equation}
j(\ga_{\P,\rho})(V)=\{V'\subseteq V\mid {\rm
tr}\big(\rho\,\delta(\P)_V\big)=1\}\label{j(gaPrho(V)}.
\end{equation}Our intention is to consider $\Ga\UI$ as the quantum analogue of
$[0,1]$ in the diagram \eq{Def:CommDiagramClassical}. 

Returning to the definition \eq{Def:j(ga)} of $j:\Ga\UI\map
\Ga\Om$,  we now make the critical observation that there is a
natural one-parameter family of maps
$j_r:\Ga\UI\map\Ga\Om$, $r\in(0,1]$, defined by
\begin{equation}
        j_r(\ga)(V):=\{V'\subseteq V\mid \ga(V')\geq r\}
        \label{Def:jr}
\end{equation}
for all stages $V\in\VH$. Thus \eq{Def:j(ga)} is just the special
case $j=j_1$.

It is important to note that the parameter $r$ in \eq{Def:jr} has
been chosen to lie in $(0,1]$ rather than $[0,1]$. This is
because, for all $\ga\in\Ga\UI$, for $r=0$ we have $\{V'\subseteq
V\mid \ga(V')\geq 0\}=\down{V}=\top_{\Om_V}$ for all $V\in\VH$. In particular,
even if $\ga=0$ we would still assign the truth value `totally true'.

To interpret this, suppose we choose $\De$ to lie completely \emph{outside} the spectrum of $\hat A$. Then $\hat E[\Ain\De]=\hat 0$, the null projection representing the trivially false proposition in standard quantum theory. $\hat E[\Ain\De]=\hat 0$ implies that $\delta(\hat E[\Ain\De])_V=\hat 0$ for all $V$, which corresponds to the empty subobject $\ps{\emptyset}\in\Subcl{\Sig}$, the representative of the trivially false proposition in the topos approach to quantum theory. For any quantum state $\rho$, the corresponding probability measure $\mu^\rho:\Subcl{\Sig}\map\Ga\UI$ will map the empty subobject $\ps\emptyset$ to the global element $\ga_{\hat 0,\rho}=0$, the global element that is constantly $0$ (see \eq{Def:murho} and \eq{Def:gaPrho}; for details, see \cite{ADmeas}). From \eq{Def:jr}, we obtain
\begin{equation}
                        j_r(\ga_{\hat 0,\rho})(V)=\emptyset
\end{equation}
for all $r>0$, while for $r=0$,
\begin{equation}
                        j_0(\ga_{\hat 0,\rho})(V)=\top_{\Om_V}.
\end{equation}
The latter means that we assign `totally true' to the trivially false proposition, which we want to avoid by excluding $r=0$. Note that \emph{only} for $r=0$, we get `totally true', for all $r>0$, we obtain `totally false'. It is a matter of interpretation if one wants to allow that even the trivially false proposition can be true with probability $0$ (since \emph{any} proposition is true with at least probability $0$), or if the proposition that conventionally is interpreted as trivially false must be totally false. In the latter case, one must exclude the case $r=0$, as we will do here. This has no bearing on our results.

For later reference, we note that in quantum theory, 
$\ga_{\P,\rho}(V):= \tr(\rho\,\delta(\P)_{V})$ (see \eq{Def:gaPrho}) and hence
\begin{equation}
 j_r(\ga_{\P,\rho})(V)=
 \{V'\subseteq V\mid \tr(\rho\,\delta(\P)_{V'})\geq r\}
\end{equation}
so that, in particular,
\begin{eqnarray}
                        j_r(\ga_{\hat E[\Ain\De],\rho})(V) &=&                  \nonumber
                        \{V'\subseteq V\mid \tr(\rho\,\delta(\hat E[\Ain\De])_{V'})\geq r\}\\
                        &=& \TVal{\Ain\De}{\rho}^r(V)  \label{jrga=nuAinDe}
\end{eqnarray}
for all $V\in\VH$. 

Now we come to an important result.
\begin{theorem}\label{jrseparates}
The family of maps $j_r:\Ga\UI\map\Ga\Om$, $r\in(0,1]$, is
\emph{separating}: \ie if, for all $r\in(0,1]$ we have
$j_r(\ga_1)=j_r(\ga_2)$ then $\ga_1=\ga_2$.
\end{theorem}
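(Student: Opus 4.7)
The strategy is direct and elementary: suppose $\gamma_1,\gamma_2\in\Gamma\UI$ agree under every $j_r$, $r\in(0,1]$, and show pointwise that $\gamma_1(V)=\gamma_2(V)$ for every $V\in\VH$. Because both sides of the defining condition $j_r(\gamma)(V)=\{V'\subseteq V\mid\gamma(V')\geq r\}$ only depend on the values of $\gamma$ on $\down V$, it suffices to separate the real numbers $\gamma_1(V)$ and $\gamma_2(V)$ for each fixed $V$ by choosing a single well-chosen threshold $r$.

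Concretely, I would argue by contradiction. Suppose $\gamma_1\neq\gamma_2$, so there exists $V\in\VH$ with $\gamma_1(V)\neq\gamma_2(V)$. Without loss of generality assume $\gamma_1(V)>\gamma_2(V)$. In particular $\gamma_1(V)>0$, so the interval $(\gamma_2(V),\gamma_1(V)]$ is nonempty and contained in $(0,1]$; pick any $r$ in it. Then $\gamma_1(V)\geq r$ forces $V\in j_r(\gamma_1)(V)$, whereas $\gamma_2(V)<r$ forces $V\notin j_r(\gamma_2)(V)$. Since $V\in\down V$ and we only need to check membership of $V$ itself in the sieve, this contradicts the hypothesis $j_r(\gamma_1)(V)=j_r(\gamma_2)(V)$.

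The only subtlety to watch is precisely the exclusion of $r=0$ built into the hypothesis of the theorem: if $\gamma_2(V)=0$ we still need an $r>0$ to separate, and this is guaranteed because $\gamma_1(V)>\gamma_2(V)=0$ supplies a strictly positive threshold $r\in(0,\gamma_1(V)]$. I expect no real obstacle beyond keeping this endpoint bookkeeping straight; no appeal to the nowhere-increasing property of the sections is needed for the separation itself, only for the well-definedness of $j_r$ as taking values in $\Gamma\Om$, which has already been established in the preceding discussion.
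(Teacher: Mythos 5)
Your proof is correct and follows essentially the same route as the paper's: both argue by contraposition, pick a context $V_0$ where the two sections differ with $\ga_1(V_0)>\ga_2(V_0)$, choose a threshold $r$ in $(\ga_2(V_0),\ga_1(V_0)]$ (the paper takes $r=\ga_1(V_0)$ specifically), and observe that $V_0$ lies in $j_r(\ga_1)(V_0)$ but not in $j_r(\ga_2)(V_0)$. Your side remark that the nowhere-increasing property is not needed for the separation step itself is a small but accurate refinement of the paper's presentation.
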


\begin{proof}
Suppose there are $\ga_1$ and $\ga_2$ in $\Ga\UI$ such that $\ga_1\neq\ga_2$.
Then there exists $V_0\in\VH$ such that
$\ga_1(V_0)\neq\ga_2(V_0)$. Without loss of generality we can
assume that $\ga_1(V_0)>\ga_2(V_0)$. Now consider
\begin{equation}
j_{\ga_1(V_0)}(\ga_2)(V_0)=\{V'\subseteq
V_0\mid\ga_2(V')\geq\ga_1(V_0)\}
        \label{jga1ga2V0}
\end{equation}
and
\begin{equation}
j_{\ga_1(V_0)}(\ga_1)(V_0)=\{V'\subseteq
V_0\mid\ga_1(V')\geq\ga_1(V_0)\}
        \label{jga1ga1V0}
\end{equation}
Equation \eq{jga1ga1V0} shows that $j_{\ga_1(V_0)}(\ga_1)(V_0)$ is
 the principal sieve $\down V_0$ on $V_0$. On the other
hand, since $\ga_1(V_0)>\ga_2(V_0)$, $\ga_2(V_0)$ cannot belong to
the sieve $j_{\ga_1(V_0)}(\ga_1)(V_0)$.  It follows that
\begin{equation}
        j_{\ga_1(V_0)}(\ga_1)\neq j_{\ga_1(V_0)}(\ga_2)
\end{equation}
and hence the family of maps $\{j_r:\Ga\UI\map\Ga\Om\mid
r\in(0,1]\}$ is separating.
\end{proof}


\subsection{The map $\ell:\Ga\UI\map\Ga\Om^{\Sh{\VH_A\times(0,1)_L}}$}                  \label{SubSec_Qell}
As shown above, the one-parameter family of maps
$j_r:\Ga\UI\map\Ga\Om$, $r\in(0,1]$, defined as
\begin{equation}
        j_r(\ga)(V):=\{V'\subseteq V\mid \ga(V')\geq r\}
                            \label{Def:jr2}
\end{equation}
separates the members of $\Ga\UI$ and, therefore, density matrices.
Our goal now is to combine this parameterised family to form a
\emph{single} map from $\Ga\UI$ to $\Ga\Om^\tau$ where $\Om^\tau$ is
the sub-object classifier of some new topos $\tau$.

The first step in this direction is the result gained from combining \eq{beta} and \eq{hsimeqGaO}, giving the isomorphism
\begin{equation}
        \ell:[0,1]\map\Ga\Om^{(0,1)}\label{hsimeqGaO_2}
\end{equation}
which, as discussed in Section \ref{ToposClassicalProbs},
enables a topos-logic interpretation to be given to classical
probability theory. The key results are summarised in the
commutative diagram in \eq{Def:CommDiagramClassical}.

We remark that, on the face of it, \eq{hsimeqGaO_2} is an isomorphism of sets, not Heyting algebras. By the isomorphism \eq{beta}, we could regard $[0,1]$ as a Heyting algebra, since ${\cal O}((0,1_L))$ is a Heyting algebra (as the open sets of any topological space form a Heyting algebra). Since \eq{hsimeqGaO} is a Heyting algebra isomorphism, too, we could see \eq{hsimeqGaO_2} as an isomorphism of (complete) Heyting algebras.

Yet, there is little to be gained from this: the measure $\mu:\Sub{X}\map [0,1]$ is not a Heyting algebra morphism, so in the commutative diagram \eq{Def:CommDiagramClassical}, we have some morphisms which are Heyting algebra morphisms, while others are not. We consider the fact that $[0,1]$ can be seen as a Heyting algebra, and $\ell$ as a Heyting algebra isomorphism, as coincidental. It is more important that $\ell$ is a set isomorphism, and is obviously order-preserving. This means that $\ell$ represents probabilities faithfully as truth values in our sheaf topos. Moreover, there are no other truth values apart from those corresponding with probabilities.

Yet, there is aspect of $\ell$ being a morphism of complete Heyting algebras that we \emph{do} use: namely, the preservation of joins which plays a key role in the logical reformulation of the $\sigma$-additivity of a measure $\mu$, as described at the end of subsection \ref{ToposClassicalProbs}.

In the quantum case, we can give the set of `generalised probabilities', \ie the codomain $\Ga\UI$ of our probability measures, the structure of a partially ordered set in the obvious way: let $\ga_1,\ga_2\in\Ga\UI$, then
\begin{equation}
            \ga_1\preceq\ga_2:\Longleftrightarrow\ga_1(V)\leq\ga_2(V)
\end{equation}
for all $V\in\VH$.

Let us now consider again the one-parameter family of topos truth
values
\begin{equation}
\TVal{\Ain\De}{\rho}^r(V):=\{V'\subseteq V\mid
\tr(\rho\,\delta(\hat E[\Ain\De])_{V'})\geq r\}\label{TTVrho_r3}
\end{equation}
for all stages $V\in\VH$. Here, $r\in(0,1]$ and, as shown in
\eq{jrga=nuAinDe}, \begin{equation}
        \TVal{\Ain\De}{\rho}^r=j_r(\ga_{\hat E[\Ain\De],\rho})
\end{equation}
We recall also the result from classical probability theory in
\eq{Def:SinTmua}
\begin{equation}
\Val{\ps{S}\in\TO^\mu}(r)=\{r'\leq r\mid\mu(S)\geq r'\}
\label{Def:SinTmu_c}
\end{equation}

Now, \eq{Def:SinTmu_c} is a sieve on the stage $r$ in the
sub-object classifier, $\Om^{\ShL{(0,1)}}$, in the topos
$\ShL{(0,1)}$, and \eq{TTVrho_r3} is a sieve on the stage $V$ in the
topos $\Sh{\VH_A}$. These results strongly suggest that the way of
`combining' the one-parameter family $j_r:\Ga\UI\map\Ga\Om$,
$r\in(0,1]$ is to use sieves on \emph{pairs}
$\lr{V}{r}\in\VH\times(0,1)$ with the ordering
\begin{equation}
\lr{V'}{r'}\preceq\lr{V}{r}\mbox{ iff } V'\subseteq V\mbox{ and }
r'\leq r.
\end{equation}

We recall from equation \eq{Eq_PresheafSheafTopos} that (i) $\VH_A$ denotes the context category, which is a poset, equipped with the lower Alexandroff topology, and (ii) there is an isomorphism of topoi $\SetC\VH\map\Sh{\VH_A}$.

We put the product topology on the poset $\VH_A\times (0,1)_L$. The basic opens sets in this topology are of the form
\begin{equation}
                        \down V\times (0,r)
\end{equation}
for $V\in\VH$ and $r\in [0,1]$. In the following, we will identify such a basic open with the pair $\la V,r\ra$, which makes some formulas easier to read.

Obviously, we are now interested in the sheaf topos $\Sh{\VH_A\times(0,1)_L}$, where we can 
define $\ell:\Ga\UI\map\Ga\Om^{\Sh{\VH_A\times(0,1)_L}}$ as
\begin{equation}
\ell(\ga)(\la V,r\ra):={\{}\la V',r'\ra\preceq \la
          V,r\ra\mid\ga(V')\geq r'\}  \label{Def:ell}
\end{equation}
That this \emph{is} a sieve on $\lr{V}{r}\in{\cal O}(\VH_A\times(0,1)_L)$
follows because $\ga$ is a nowhere-increasing function. In
particular, we have
\begin{eqnarray}
\hspace{-25pt}\ell(\ga_{\delta(\hat E[\Ain\De]),\rho)}(\la V,r\ra)&:= &\{\la
V',r'\ra\preceq \la V,r\ra\mid\ga_{\delta(\hat
E[\Ain\De]),\rho)}(V')
\geq r'\}\nonumber\\
        &=&\{\la V',r'\ra\preceq \la V,r\ra\mid
        \tr(\rho\,\delta(\hat E[\Ain\De])_{V'}\big)\geq r'\}
        \label{kgadEA}\ \
\end{eqnarray}

In this context it is important to note that the map
$\ell:\Ga\UI\map\Ga\Om^{\Sh{\VH_A\times(0,1)_L}}$ clearly
\emph{separates} the elements of $\Ga\UI$, as follows from an
obvious analogue of the proof of Theorem \ref{jrseparates}. For
suppose there are $\ga_1$ and $\ga_2$ such that $\ga_1\neq\ga_2$.
Then there exists $V_0\in\VH$ such that
$\ga_1(V_0)\neq\ga_2(V_0)$. Without loss of generality we can
assume that $\ga_1(V_0)>\ga_2(V_0)$. Then we compare
\begin{equation}
\ell(\ga_2)\big(\la V_0,\ga_2(V_0)\ra\big):=\{\la V',r'\ra\preceq
\la V_0,
        \ga_2(V_0)\ra \mid\ga_2(V')\geq r'\}=
        \; \downarrow\!\la V_0,\ga_2(V_0)\ra
\end{equation}
with
\begin{equation}
\ell(\ga_1)\big(\la V_0,\ga_2(V_0)\ra\big):=\{\la V',r'\ra\preceq
\la V_0,
        \ga_2(V_0)\ra \mid\ga_1(V')\geq r'\}\subsetneq
        \;\downarrow\!\la V_0,\ga_2(V_0)\ra
\end{equation}
It follows that $\ell(\ga_2)\neq\ell(\ga_1)$, as claimed. This means that the map $\ell:\Ga\UI\map\Ga\Om^{\Sh{\VH_A\times(0,1)_L}}$ is injective. 

We show that $\ell$ preserves joins: let $(\ga_i)_{i\in I}\subset\Ga\UI$ be a family of global elements of $\UI$, then
\begin{align}
                        \ell(\bigvee_i \ga_i)(\la V,r\ra) &=
                        \{\la V',r'\ra\preceq\la V,r\ra \mid (\bigvee_i \ga_i)(V')\geq r'\}\\
                        &= \{\la V',r'\ra\preceq\la V,r\ra \mid \sup_i\ga_i(V')\geq r'\}\\
                        &= \bigcup_i (\{\la V',r'\ra\preceq\la V,r\ra \mid \ga_i(V')\geq r'\})\\
                        &= \bigvee_i \ell(\ga_i)(\la V,r\ra).
\end{align}
Since this holds for all stages $\la V,r\ra$\footnote{Strictly speaking, the stages $\la V,r\ra=\down V\times (0,r)$ give just a basis of the topology on $\VH_A\times (0,1)_L$, so we would have to consider arbitrary unions of these sets to describe \emph{all} stages. This clearly poses no difficulty.} and joins in $\Ga\Om^{\Sh{\VH_A\times (0,1)_L}}$ are defined stagewise, we obtain
\begin{equation}
                        \ell(\bigvee_i \ga_i)=\bigvee_i \ell(\ga_i).
\end{equation}

\subsection{The truth object $\eps{\TO}^\rho$}
Our intention is that \eq{kgadEA} will be the truth value of the
proposition ``$\Ain\De$'' when the quantum state is the density
matrix $\rho$. The topos in question is $\Sh{\VH_A\times(0,1)_L}$
and, for notational convenience, we will denote a sheaf
on $\VH_A\times (0,1)_L$ by a symbol $\eps{A}$ to distinguish it from the symbol
$\ps{B}$ for a sheaf on $\VH_A$ (or, equivalently, a presheaf over $\VH$). A key step will be to
define a truth object $\eps{\TO}^\rho$ from which \eq{kgadEA}
follows as the correct truth value.

The first step is to construct certain physically important
sheaves over $\VH_A\times (0,1)_L$. To this end define the
projection maps $p_1:\VH_A\times (0,1)_L\map\VH_A$ and $p_2:\VH_A\times (0,1)_L\map (0,1)_L$. These
can be used to pull-back sheaves over $\VH_A$ or $(0,1)_L$ to give
sheaves over $\VH_A\times (0,1)_L$. Thus, if $\ps{C}$ is a sheaf on $\VH_A$
we have
\begin{equation}
        (p_1^*\ps{C})_\lr{V}{r}=\ps{C}_V
\end{equation}
at each stage $\lr{V}{r}\in{\cal O}(\VH_A\times (0,1)_L)$.

A key step is to identify the state object and quantity-value
object in this new topos $\Sh{\VH_A\times(0,1)_L}$. We will make
the simplest assumption that there is no $r$-dependence in the
state object: indeed, from a physical perspective it is hard to
see where such a dependence could come from. Therefore, we define
the state object in $\Sh{\VH_A\times(0,1)_L}$ as
\begin{equation}
        \eps\Si:=p_1^*\Sig.                     \label{Def:Sig__}
\end{equation}

In the topos $\Sh{\VH_A}$ a quantum proposition associated with the
projector $\P$ is represented by the
clopen sub-object, $\ps{\das{P}}$,
of $\Sig$. In the light of \eq{Def:Sig__} it is then natural to
assume that the sub-object that represents the quantum proposition
is just
\begin{equation}
 \eps{\das{P}}:=p_1^*\ps{\das{P}}.
\end{equation}
The quantity-value object must then be defined as
\begin{equation}
        \eps\R:=p_1^*\ps\R
\end{equation}
where $\ps\R$ is the quantity-value object in $\Sh{\VH_A}$. This
guarantees that the $\Sh{\VH_A}$-representation of a physical quantity
by an arrow $\breve{A}:\Sig\map\ps\R$ will translate into a
representation in $\Sh{\VH_A\times (0,1)_L}$ by an arrow from $\eps\Si$ to
$\eps\R$.

All the objects in $\Sh{\VH_A\times(0,1)_L}$ defined above are
obtained by a trivial pull-back from the corresponding objects in
$\Sh{\VH_A}$. The critical, non-trivial, object is the truth object
$\eps\TO^\rho$. One anticipates that this will be derived in some
way from the one-parameter family, $r\mapsto\ps\TO^{\rho,r}$,
$r\in(0,1]$ defined in \eq{Def:TVNew}.

In fact, the obvious choice works. Namely, define $\eps\TO^\rho$
as
\begin{equation}
       \eps\TO^\rho_\lr{V}{r}:=\ps\TO^{\rho,r}_V=
       \{\ps{S}\in\Subcl{\Sig\ddown{V}}\mid
\forall V'\subseteq V,\tr(\rho\P_{\ps S_{V'}})\geq r\}
\end{equation}
for all stages $\lr{V}{r}\in{\cal O}(\VH_A\times (0,1)_L)$.

This equation shows that $\eps{\TO}^\rho$ is a sub-object of
$P\eps{\Si}$, and $\eps{\das{P}}$ is a sub-object of $\eps{\Si}$.
It follows that the valuation $\Val{\eps{\das{P}}\in\eps\TO^\rho}$
is well-defined as an element of $\Ga\Om^{\Sh{\VH_A\times (0,1)_L}}$. The
value at any basic open $\lr{V}r=\down V\times (0,1)\in{\cal O}(\VH_A\times (0,1)_L)$ is
\begin{equation}
\Val{\eps{\das{P}}\in\eps\TO^\rho}\lr{V}{r}= \{\la V',r'\ra\preceq
\la V,r\ra\mid
        \tr(\rho\,\delta(\P)_{V'}\big)\geq r'\}\label{delPinTVr}
\end{equation}
which, as anticipated, is just \eq{kgadEA}. In summary, the truth
value associated with the proposition``$\Ain\De$'' in the quantum
state $\rho$\ is
\begin{equation}
\TVal{\Ain\De}{\rho}:=\Val{\eps{\de(\hat
E[\Ain\De])}\in\eps\TO^\rho}\in\Ga\Om^{\Sh{\VH_A\times (0,1)_L}}
        \label{TVinVHx}
\end{equation}

The equation \eq{TVinVHx} is the main result of our paper. The
critical feature of the truth value, \eq{TVinVHx}, in the topos
$\Sh{\VH_A\times(0,1)_L}$ is that it \emph{separates} density
matrices, in marked contrast to the more elementary truth value,
\eq{TTVrho}, in the topos $\Sh\VH$.

\subsection{The analogy with classical probability theory}
The analogy with classical probability is quite striking. This, we
recall, is summarised by the commutative diagram
\begin{equation}
\begin{diagram}                 \label{Def:CommDiagramClassical2}
        \Sub{X} & & \rTo^{\mu} & & [0,1] \\
        \dTo^{\Delta} & & & & \dTo_{\ell} \\
        {\rm Sub}_{\ShL{(0,1)}}(\ps X) & & \rTo_{\xi^\mu} & & \Ga\Om^{(0,1)}
\end{diagram}
\end{equation}

Our claim is that there is a precise quantum analogue of this,
namely the commutative diagram
\begin{equation}
\begin{diagram}                 \label{Def:CommDiagramQuantum}
        \Subcl{\Sig} & & \rTo^{\mu^\rho} & & \Ga\UI \\
        \dTo^{p_1^*} & & & & \dTo_{\ell} \\
        \Subcl{\eps\Si} & & \rTo_{\xi^\rho} & & \Ga\Om^{\Sh{\VH_A\times (0,1)_L}}
\end{diagram}
\end{equation}

The only map in \eq{Def:CommDiagramQuantum} that has not been
defined already is
\begin{equation}
                        \xi^\rho:\Subcl{\eps\Si}\map\Ga\Om^{\Sh{\VH_A\times (0,1)_L}}.
\end{equation}
Motivated by the construction \eq{Def:ximu} in the classical case,
we define
\begin{equation}
        \xi^\rho(\eps{S}):=\Val{\eps{S}\in\eps\TO^\rho}
\end{equation}
for all sub-objects $\eps{S}$ of $\eps\Si$. We now observe that
\eq{delPinTVr} can be rewritten as
\begin{equation}
\Val{p_1^*\ps{\das{P}}\in\eps\TO^\rho}\lr{V}{r}:=
\{\lr{V'}{r'}\preceq\lr{V}r\mid\mu^\rho(\ps{\das{P}}_{V'})\geq
r'\}
\end{equation}
where we have used that fact that
$\eps{\das{P}}:=p_1^*\ps{\das{P}}$.

However, $\ell:\Ga\UI\map\Ga\Om^{\Sh{\VH_A\times(0,1)_L}}$ was
defined in \eq{Def:ell} as
\begin{equation}
\ell(\ga)(\la V,r\ra):={\{}\la V',r'\ra\preceq \la
          V,r\ra\mid\ga(V')\geq r'\}  
\end{equation}
and so it follows that
\begin{equation}\xi^\rho(p_1^*\ps{\das P})=
\Val{p_1^*\ps{\das{P}}\in\eps\TO^\rho}=\ell\circ\mu^\rho
\big(\ps{\das{P}}\big)\label{dpinT=lnurho}
\end{equation}
It is easy to check that this equation actually applies to any
clopen sub-object of $\Sig$, not just those of the form
$\ps{\das{P}}$. Thus \eq{Def:CommDiagramQuantum} is indeed a
commutative diagram.

The equation \eq{dpinT=lnurho} is the precise statement of the
relationship between the topos truth value,
$\Val{p_1^*\ps{\das{P}}\in\eps\TO^\rho}$, in $\Om^{\Sh{\VH_A\times (0,1)_L}}$ and the
measure $\mu^\rho:\Subcl\Sig\map\Ga\UI$.

If we consider a normal quantum state $\rho$ (\ie a state that corresponds to a density matrix, that is, a convex combination of vector states), then we obtain a measure $\mu^\rho:\Subcl{\Sig}\map\Ga\UI$  that is $\sigma$-additive in the following sense. Let $(\ps S_i)_{i\in\bbN}$ be a countable, increasing family of clopen subobjects, then
\begin{equation}
                        \mu^\rho(\bigvee_i \ps S_i)=\bigvee_i \mu^\rho(\ps S_i).
\end{equation}
This is equivalent to Corollary IV.2 in \cite{ADmeas}. Compare this formula with \eq{Eq_sigmaAddAsJoinPreservation}: the corresponding result for classical, $\sigma$-additive measures.

Together with the fact that $\ell$ preserves joins, as proven in subsection \ref{SubSec_Qell}, this implies that $\ell\circ\mu^\rho:\Subcl{\Sig}\map\Ga\Om^{\Sh{\VH_A\times (0,1)_L}}$ preserves countable joins of increasing families $(\ps S_i)_{i\in\bbN}$ of clopen subobjects. Since, by the commutativity of the diagram \eq{Def:CommDiagramQuantum}, $\ell\circ\mu^\rho=\xi^\rho\circ p_1^*$, we have
\begin{equation}
                        (\xi^\rho\circ p_1^*)(\bigvee_i \ps S_i)=(\ell\circ\mu^\rho)(\bigvee_i \ps S_i)=\bigvee_i (\ell\circ\mu^\rho)(\ps S_i).
\end{equation}
This is the logical reformulation of normality (\ie $\sigma$-additivity) of the quantum state $\rho$. 

\subsection{The Born rule and its logical reformulation}
The diagram \eq{Def:CommDiagramQuantum} encodes a logical version of the Born rule. Propositions ``$\Ain\De$'' are represented by clopen subobjects in the topos approach. Specifically, the proposition ``$\Ain\De$'' is represented by the clopen subobject $\ps{\delta(\hat E[\Ain\De])}\in\Subcl{\Sig}$. The measure $\mu^\rho$ maps this clopen subobject to a global element of $\UI$,
\begin{equation}
                        \mu^\rho(\ps{\delta(\hat E[\Ain\De])})\in\Ga\UI.
\end{equation}
For each $V\in\VH$, we have $\mu^\rho(\ps{\delta(\hat E[\Ain\De])})(V)\in [0,1]$. As was shown in \cite{ADmeas} (and as can be seen from \eq{Def:murho}), the smallest value $\mu^\rho(\ps{\delta(\hat E[\Ain\De])})(V)$ of this global element is nothing but the usual expectation value of the projection $\hat E[\Ain\De]$ in the state $\rho$, that is
\begin{equation}                        \label{Eq_MinGlobalSec}
                        \tr(\rho\hat E[\Ain\De])=\min_{V\in\VH}\mu^\rho(\ps{\delta(\hat E[\Ain\De])})(V).
\end{equation}
In this sense, the topos approach reproduces the Born rule (for projections) and rephrases it in terms of probability measures on the spectral presheaf $\Sig$.\footnote{This can be extended to expectation values of arbitrary self-adjoint operators, as was shown in \cite{ADmeas}.} In standard quantum theory, the expectation value $\tr(\rho\hat E[\Ain\De])$ is interpreted as the probability that upon measurement of the physical quantity $A$ in the state $\rho$, the measurement outcome will lie in the Borel set $\De\subseteq\mathbb R$. This is an instrumentalist interpretation that crucially depends on concepts of measurement and observation.

We can now map the global element $\mu^\rho(\ps{\delta(\hat E[\Ain\De])})$ of $\UI$ by $\ell$ to a truth value in the sheaf topos $\Sh{\VH_A\times (0,1)_L}$. As observed earlier, this map is injective, which means that we faithfully capture the information contained in $\mu^\rho(\ps{\delta(\hat E[\Ain\De])})$ by the truth value $\ell(\mu^\rho(\ps{\delta(\hat E[\Ain\De])}))$.

As we saw in \eq{Eq_MinGlobalSec}, the expectation value $\tr(\rho\hat E[\Ain\De])$, obtained from the Born rule, is only the minimal value of the global element $\mu^\rho(\ps{\delta(\hat E[\Ain\De])})$, hence this global element and its image under $\ell$, the truth value in the sheaf topos, contain more information than just the expectation value. In future work, we will consider the question what this extra information means physically. For now, we just emphasise that the truth value
\begin{equation}
                        \ell(\mu^\rho(\ps{\delta(\hat E[\Ain\De])}))
\end{equation}
of the proposition ``$\Ain\De$'' in the state $\rho$ can be interpreted in a realist manner: the proposition refers to the physical world, the state is the state of the system (and not a description of an ensemble, or a state of knowledge, or similar), and the truth value represents one aspect of \emph{how things are}, independent of measurements and observers. Moreover, \emph{all} propositions have truth values in all states.

\section{Conclusions}
Over the years there have been a number of attempts to understand quantum theory with the aid of some sort of multi-valued logic, usually related some way to probabilistic ideas. The earliest attempt was by the Polish mathematician Lukasiewicz \cite{Lukasiewicz1913} after whose work a number of philosophers of science have made a variety of proposals. 

Much of this work involved three-valued logic, the most famous proponent of which was probably Reichenbach \cite{Reichenbach1944}. One problem commonly faced by such schemes is an uncertainty of how to define the logical connectives.  

Lukasiewicz logic that is infinite-valued has also been studied, and this includes logics whose truth values lie in $[0,1]$. This leads naturally to the subject of \emph{fuzzy logic} and \emph{fuzzy set theory}. We refer the reader to the very useful review by Pykacz \cite{Pykacz} which introduces the historical background to these ideas.  Another attempt to relate probability to logic is that of Carnap whose work, like much else, has largely disappeared into the hazy past \cite{Carnap} . 

Our topos approach is different to any of the existing schemes and, we would claim, is better motivated and underpinned with very powerful mathematical machinery. We have no problem defining logical connectives as this structure is given \emph{uniquely} by  the theory.  That is, the  propositional logic is given by the Heyting algebra $\Subcl\Sig$, and the possible truth values belong to the Heyting algebra $\Ga\Om$.

In the topos theory, truth values are not only  multi-valued, they are also \emph{contextual} in a way that is deeply tied to the underlying quantum theory. That explains why our probability measures are not simply $[0,1]$-valued but are associated with arrows from $\Subcl\Sig$ to (global elements of) the presheaf $\UI$.

In the present paper we make the strong claim that standard, \emph{classical} probability theory can be faithfully represented by the global elements of the sub-object classifier, $\Om^{(0,1)}$, in the topos, $\ShL{(0,1)}$, of sheaves on the topological space $(0,1)_L$, whose open sets correspond bijectively to probabilities in the interval $[0,1]$. This tight link between classical probability measures and Heyting algebras is captured precisely in the commutative map diagram in \eq{Def:CommDiagramClassical}. This approach to probability theory allows for a new type of non-instrumentalist interpretation that  might be particularly appropriate in `propensity' schemes.

What, to us, is rather striking is that the same can be said about the interpretation of probability in quantum theory. The relevant commutative diagram here is \eq{Def:CommDiagramQuantum} which shows how our existing topos quantum theory, which uses the topos $\Sh{\VH_A}$, can be combined with our suggested topos approach to probability, which uses the topos $\ShL{(0,1)}$, to give a new topos quantum scheme which involves sheaves in the topos $\Sh{\VH_A\times(0,1)_L}$. 

As we have seen, in this scheme, results  of quantum theory be coded in either (i) the topos probability measures $\mu^\rho:\Subcl\Sig\map\UI$ (which reinforces our slogan ``Quantum physics is equivalent to classical physics in the appropriate topos''); or (ii) the topos truth values, $\TVal{\Ain\De}{\rho}$, defined in \eq{TVinVHx}, which take their values in the Heyting algebra $\Ga\Om^{\Sh{\VH_A\times (0,1)_L}}$. 

Thus, in both classical and quantum physics, probability can be faithfully interpreted using truth values in sheaf topoi with an intuitionistic logic.

\section{Appendix}
We prove some technical results here.

First, we recall the definition in \eq{Def:aV} of the isomorphism $\a_V:\PV=\G_V \map \mathcal{C}l(\Sig_V)$ between the projections in $V$ and the clopen subsets of the
Gel'fand spectrum, $\Sig_V$, of $V$. 
A basic property of
this assignment is the existence of the commutative squares:
\begin{equation} \label{Def:CommDiagramCP1}
\begin{diagram}
        \mathcal{C}l(\Sig_V) & & \lTo^{\a_V} & \PV \\
        \dTo^{\Sig(i_{V'V})} &  & & \dTo_{\G(i_{V'V})=
                                \de(\cdot)_{V'}}  \\
        \mathcal{C}l(\Sig_{V'}) & & \lTo_{\a_{V'}} & \mathcal{P}(V')
\end{diagram}
\end{equation}
and
\begin{equation}
\begin{diagram}                 \label{Def:CommDiagramCP2}
        \mathcal{C}l(\Sig_V) & & \rTo^{\a_V^{-1}} & \PV \\
        \dTo^{\Sig(i_{V'V})} & & & \dTo_{\G(i_{V'V})=
                                \de(\cdot)_{V'}} \\
        \mathcal{C}l(\Sig_{V'}) & & \rTo_{\a_{V'}^{-1}} & \mathcal{P}(V').
\end{diagram}
\end{equation}

\begin{proposition}\label{Theorem:HypO=Subcl}
There is a bijection $k:\HG\map\Subcl\Sig$ defined for all
$\ga\in\HG$ by
\begin{equation}
        k(\ga)_V:=\a_V(\hat\ga_V)=S_{\hat\ga_V}\label{Def:kA}
\end{equation}
for all $V\in\VH$.
\end{proposition}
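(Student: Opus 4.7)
The plan is to verify that $k$ lands in $\Subcl\Sig$, then to produce a two-sided inverse $j$ by the formula already announced in the excerpt (equation \eq{Def:j}) and check that it lands in $\HG$, and finally to note that mutual inversion is automatic from the bijectivity of each $\a_V$.

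First, I would check that for every hyper-element $\ga\in\HG$, the family $(S_{\hat\ga_V})_{V\in\VH}$ really is a sub-presheaf of $\Sig$ with clopen components. The clopen condition is immediate since $\a_V:\PV\to\mathcal{C}l(\Sig_V)$ is an isomorphism of Boolean algebras in \eq{Def:aV}. For the naturality condition, pick $V'\subseteq V$ and apply the commutative square \eq{Def:CommDiagramCP1}:
\begin{equation*}
\Sig(i_{V'V})(k(\ga)_V)=\Sig(i_{V'V})(\a_V(\hat\ga_V))
=\a_{V'}(\de(\hat\ga_V)_{V'}).
\end{equation*}
The hyper-element inequality $\de(\hat\ga_V)_{V'}\preceq\hat\ga_{V'}$, combined with the fact that $\a_{V'}$ is an \emph{order-preserving} bijection between the projection lattice $\mathcal{P}(V')$ and the lattice of clopen subsets of $\Sig_{V'}$, then gives $\Sig(i_{V'V})(k(\ga)_V)\subseteq\a_{V'}(\hat\ga_{V'})=k(\ga)_{V'}$, as required.

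Next I would define the candidate inverse $j:\Subcl{\Sig}\map\HG$ by $j(\ps S)_V:=\a_V^{-1}(\ps S_V)=\P_{\ps S_V}$, as in \eq{Def:j}, and verify that it really produces a hyper-element. For this I would use the companion square \eq{Def:CommDiagramCP2}: for $V'\subseteq V$,
\begin{equation*}
\de(j(\ps S)_V)_{V'}=\de(\a_V^{-1}(\ps S_V))_{V'}
=\a_{V'}^{-1}(\Sig(i_{V'V})(\ps S_V)).
\end{equation*}
Since $\ps S$ is a sub-presheaf of $\Sig$, we have $\Sig(i_{V'V})(\ps S_V)\subseteq \ps S_{V'}$; applying the order-preserving bijection $\a_{V'}^{-1}$ yields $\de(j(\ps S)_V)_{V'}\preceq \a_{V'}^{-1}(\ps S_{V'})=j(\ps S)_{V'}$, which is precisely the hyper-element condition.

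Finally, the mutual inversion $k\circ j=\mathrm{id}_{\Subcl\Sig}$ and $j\circ k=\mathrm{id}_{\HG}$ is checked componentwise at each $V$; at every stage it is exactly the statement that $\a_V$ and $\a_V^{-1}$ are mutually inverse. The only real content of the proposition sits in the two naturality checks above, so the potential pitfall is not the bijection itself but the bookkeeping with the two commutative squares of \eq{Def:CommDiagramCP1}--\eq{Def:CommDiagramCP2}; in particular one must use the correct direction of $\de(\cdot)_{V'}$ (the inner approximation on projections) and remember that it coincides with $\G(i_{V'V})$, so that the order-preserving property of $\a_{V'}$ and $\a_{V'}^{-1}$ can transport the projection inequality into the subset inclusion demanded of a sub-presheaf.
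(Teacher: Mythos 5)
Your proof follows the paper's own argument essentially verbatim: both directions use the commutative squares \eq{Def:CommDiagramCP1} and \eq{Def:CommDiagramCP2} together with the order-preservation of $\a_{V'}$ and $\a_{V'}^{-1}$ to transport the hyper-element inequality into the sub-presheaf inclusion and vice versa, with mutual inversion checked stage-wise. The only quibble is your closing parenthetical describing $\de(\cdot)_{V'}$ as the ``inner'' approximation --- it is the \emph{outer} daseinisation (approximation from above, as in \eq{Def:de(P)}, which is what the outer presheaf $\G$ restricts along) --- but this slip plays no role in the actual argument.
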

\begin{proof}
First, let $\ga\in\HG$. In order for $k(\hat\ga)$ to be a (clopen)
sub-object of $\Sig$, it is necessary and sufficient that, for all
$V',V\in\VH$ with $i_{V'V}:V'\subseteq V$, we have
\begin{equation}
        \Sig(i_{V'V}) \big(k(\ga)_V)\subseteq k(\ga)_{V'}
        \label{Sig(kga)A}
\end{equation}
However, the commutative square in \eq{Def:CommDiagramCP1} gives
\begin{equation}
        \Sig(i_{V'V})({S}_{\hat\a})={S}_{\G(i_{V'V})(\hat\a)}
        =S_{\das{\a}_{V'}}\\                            \label{SOSigS}
\end{equation}
for all $\hat\a\in\G_V$ and for all  $V,V'\in\VH$ with
$i_{V'V}:V'\subseteq V$. Because $\ga$ is a hyper-element of $\G$
we have $\das{\ga_V}_{V'}\preceq \hat\ga_{V'}$, and hence
\begin{eqnarray}
 \Sig(i_{V'V}) \big(k(\ga)_V\big)&=&\Sig(i_{V'V}) \big(S_{\hat\ga_V}\big)\\
 &=&S_{\das{\ga_V}_{V'}}\subseteq S_{\hat\ga_{V'}}=k(\ga)_{V'}
\end{eqnarray}
which proves \eq{Sig(kga)A}, as required. Thus the map $k$ in
\eq{Def:kA} defines a  clopen sub-object of $\Sig$.

Conversely, define a map $j:\Subcl{\Sig}\map\HG$ by
\begin{equation}
j(\ps{S})_V:=\a_V^{-1}(\ps{S}_V) =\P_{\ps{S}_V}\label{Def:jA}
\end{equation}
for all $\ps{S}\in\Subcl\Sig$ and for all $V\in\VH$. To check that
the right hand side of \eq{Def:jA} is indeed a hyper-element of
$\G$, first note that the commutative square
\eq{Def:CommDiagramCP2} gives
\begin{equation}
      \de(j(\ps{S})_V)_{V'}=  \de(\P_{\ps{S}_V})_{V'}=
        \a_{V'}^{-1}\big(\Sig(i_{V'V})(\ps{S}_V)\big)\label{doohdarA}
\end{equation}
However, since $\ps{S}$ is a sub-object of $\Sig$, we have
$\Sig(i_{V'V})(\ps{S}_V)\subseteq\ps{S}_{V'}$, and so, from
\eq{doohdarA}, for all $V'\subseteq V$,
\begin{equation}
\de(j(\ps{S})_V)_{V'}\preceq\a_{V'}^{-1}(\ps{S}_{V'})=
\P_{\ps{S}_{V'}}=j(\ps{S})_{V'}
\end{equation}
Thus the association $V\mapsto j(\ps{S})_V$ does indeed define a
hyper-element of $\G$.

It is straightforward to show that the maps $k:\HG\map\Subcl\Sig$
and $j:\Subcl{\Sig}\map\HG$ are inverses of each. Thus the theorem
is proved.
\end{proof}

\begin{proposition}\label{Theorem:SubO=HypO}
        There is a bijective correspondence
\begin{equation}
                c:\Sub\G\map\HG
\end{equation}
defined by
\begin{equation}
c(\ps{A})_V:=\bigvee\{\hat\a\mid\hat\a\in\ps{A}_V\}\label{Def:cA}
\end{equation}
for all sub-objects $\ps{A}$ of $\G$.
\end{proposition}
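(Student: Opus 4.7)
The strategy is to produce an explicit two-sided inverse $d:\HG\map\Sub\G$ given by $d(\ga)_V := \{\hat\a\in\G_V \mid \hat\a\preceq\hat\ga_V\}$ (the principal downset of $\hat\ga_V$ in the complete Boolean algebra $\PV$), and then verify in turn: (i) $c(\ps{A})$ lies in $\HG$; (ii) $d(\ga)$ is a genuine sub-presheaf of $\G$; (iii) $c\circ d$ is the identity on $\HG$; and (iv) $d\circ c$ is the identity on $\Sub\G$.

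For (i), fix $V'\subseteq V$. The crucial structural fact I would invoke is that outer daseinisation $\de(\cdot)_{V'}:\PV\map\PV'$ preserves arbitrary joins: since $\de(\hat P)_{V'}$ is by definition the least $\hat Q\in\PV'$ with $\hat Q\succeq\hat P$, a standard adjoint-style argument shows $\de(\bigvee_i \hat P_i)_{V'}=\bigvee_i \de(\hat P_i)_{V'}$. Combined with the sub-presheaf condition $\de(\hat\a)_{V'}\in\ps{A}_{V'}$ for every $\hat\a\in\ps{A}_V$, this yields
\begin{equation*}
\de(c(\ps{A})_V)_{V'} \;=\; \bigvee_{\hat\a\in\ps{A}_V}\de(\hat\a)_{V'} \;\preceq\; \bigvee \ps{A}_{V'} \;=\; c(\ps{A})_{V'},
\end{equation*}
which is exactly the defining hyper-element inequality.

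For (ii), take any $\hat\a\in d(\ga)_V$, so that $\hat\a\preceq\hat\ga_V$. Monotonicity of daseinisation gives $\de(\hat\a)_{V'}\preceq \de(\hat\ga_V)_{V'}$, and the hyper-element condition on $\ga$ gives $\de(\hat\ga_V)_{V'}\preceq \hat\ga_{V'}$, so $\de(\hat\a)_{V'}\in d(\ga)_{V'}$, as required. Step (iii) is then immediate: for each $V$, $c(d(\ga))_V$ is the join of the principal downset of $\hat\ga_V$ in $\PV$, which is $\hat\ga_V$ itself.

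The main obstacle, and the step I would treat with the most care, is (iv). For an arbitrary sub-presheaf, $d(c(\ps{A}))_V$ is the full principal downset $\{\hat\a\preceq\bigvee\ps{A}_V\}$ in $\PV$, and this strictly contains $\ps{A}_V$ whenever the latter is not already the principal downset of its own supremum. To obtain a genuine bijection one must therefore read $\Sub\G$ in the sense implicitly in force in the subsequent applications, namely as sub-presheaves whose component $\ps{A}_V$ at each stage coincides with the principal downset of $\bigvee\ps{A}_V$ in $\PV$ (equivalently, sub-presheaves lying in the image of $d$). Under this convention (iv) holds by construction and the bijection $\Sub\G\simeq\HG$ follows; this convention is also precisely what is needed to make the ensuing identifications $\HG\simeq\Subcl\Sig$ and $P\G\simeq\PSig$ go through cleanly.
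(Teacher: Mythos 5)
Your argument follows the paper's own route exactly in its first three steps: the same candidate inverse $d(\ga)_V=\{\hat\a\in\G_V\mid\hat\a\preceq\hat\ga_V\}$, the same use of join-preservation of $\de(\cdot)_{V'}$ (the paper phrases this as ``daseinisation commutes with the logical $\lor$-operation'') to show $c(\ps{A})\in\HG$, and the same monotonicity-plus-hyper-element argument to show $d(\ga)\in\Sub\G$; your computation $c(d(\ga))_V=\bigvee\{\hat\a\mid\hat\a\preceq\hat\ga_V\}=\hat\ga_V$ covers one half of the inverse check, which the paper dismisses with ``it is easy to check''.

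Where you go beyond the paper is step (iv), and you are right to do so. The claim that $d\circ c$ is the identity on all of $\Sub\G$ is false: $d(c(\ps{A}))_V$ is the principal downset of $\bigvee\ps{A}_V$ in $\PV$, which strictly contains $\ps{A}_V$ in general. Concretely, the constant subfunctor $\ps{A}_V=\{\hat 1\}$ is a legitimate sub-object of $\G$ (stable under restriction because $\de(\hat 1)_{V'}=\hat 1$), yet $c$ sends it to the same hyper-element --- the one with $\hat\ga_V=\hat 1$ for all $V$ --- as $\G$ itself; so $c$ is not injective on arbitrary sub-presheaves and the proposition as literally stated fails. Your repair, reading $\Sub\G$ as those sub-presheaves whose component at each stage is the principal downset of its own join (equivalently, the image of $d$), is the minimal one that makes $c$ and $d$ mutually inverse, and it is also what is needed for the subsequent identifications $\Sub\G\simeq\Subcl\Sig$ and $P\G\simeq\PSig$ to be genuine bijections. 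In short: your proof is correct for the (necessary) corrected statement, and the gap it isolates lies in the paper's own argument, not in yours.
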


\begin{proof}
The first step is to show that the right hand side of \eq{Def:cA}
is a hyper-element of $\G$. To this end we note that, for any
$V'\subseteq V$,
\begin{equation}
        \de(c(\ps{A}_V))_{V'}=\de\big(\bigvee\{\hat\a\in\ps{A}_V\}\big)_{V'}=
        \bigvee\{\das\a_{V'}\mid\hat\a\in\ps{A}_V\}
\end{equation}
where we have used the fact that daseinisation commutes with the
logical $\lor$-operation. It is then clear that, since
$\{\das\a_{V'}\mid\hat\a\in\ps{A}_V\}\subseteq
\{\hat\b\mid\hat\b\in\ps{A}_{V'}\}$,
\begin{equation}
        \de(c(\ps{A}_V))_{V'}= \bigvee\{\das\a_{V'}\mid\hat\a\in\ps{A}_V\}
        \preceq\bigvee\{\hat\b\mid\hat\b\in\ps{A}_{V'}\}
        =c(\ps{A})_{V'}
\end{equation}
Therefore, the function $V\mapsto c(\ps{A})_V$ defines a
hyper-element of $\G$, as required.

Conversely, define a function $d:\HG\map\Sub\G$ by
\begin{equation}
        d(\ga)_V:=\{\hat\a\in\PV\mid\hat\a\preceq\hat\ga_V\}
        \label{Def:dA}
\end{equation}
for all $V\in\VH$. To show that the right hand side of \eq{Def:dA}
is a sub-object of the outer presheaf, $\G$, we first note that,
for all $V'\subseteq V$,
\begin{equation}
\G(i_{V'V})\big(d(\ga)_V\big)=
\de\big(\{\hat\a\in\PV\mid\hat\a\preceq\hat\ga_V\}\big)_{V'}
=\{\das\a_{V'}\mid\hat\a\preceq\hat\ga_V\}
\end{equation}
Now, if $\hat\a\preceq\hat\ga_V$ in $\PV$ then
$\das\a_{V'}\preceq\de(\hat\ga_V)_{V'}$ in $\mathcal{P}(V')$ and,
because $\ga$ is a hyper-element,
$\de(\hat\ga_V)_{V'}\preceq\hat\ga_{V'}$. Therefore,
\begin{eqnarray}
\G(i_{V'V})\big(d(\ga)_V\big)&=
&\{\das\a_{V'}\mid\hat\a\preceq\hat\ga_V\}\nonumber\\
&\subseteq&\{\das\a_{V'}\mid\das\a_{V'}
                \preceq\de(\hat\ga_V)_{V'}\}\nonumber\\
&\subseteq&\{\das\a_{V'}\mid\das\a_{V'}
        \preceq\hat\ga_{V'}\}\nonumber\\
&\subseteq&
\{\hat\b\in\mathcal{P}(V')\mid\hat\b\preceq\hat\ga_{V'}\}
=d(\ga)_{V'}
\end{eqnarray}
Thus $d(\ga)$ is a sub-object of $\G$, as claimed. It is easy to
check that $c:\Sub\G\map\HG$ and $d:\HG\map\Sub\G$ are inverses.
\end{proof}

\end{document}